\newtheorem{theorem}{Theorem}
\newtheorem{definition}{Definition}
\newtheorem{corollary}{Corollary}
\newtheorem{condition}{Condition}
\newtheorem{lemma}{Lemma}
\begin{document}
%
\title{Sufficient Conditions for Parameter Convergence over Embedded Manifolds using Kernel Techniques}
%
%
%

\author{Sai~Tej~Paruchuri,~
        Jia~Guo,~
        and~Andrew~Kurdila
\thanks{M. Shell was with the Department
of Electrical and Computer Engineering, Georgia Institute of Technology, Atlanta,
GA, 30332 USA e-mail: (see http://www.michaelshell.org/contact.html).}
\thanks{J. Doe and J. Doe are with Anonymous University.}
\thanks{Manuscript received April 19, 2005; revised August 26, 2015.}}

%
%

\markboth{Journal of \LaTeX\ Class Files,~Vol.~14, No.~8, August~2015}%
{Shell \MakeLowercase{\textit{et al.}}: Bare Demo of IEEEtran.cls for IEEE Journals}
%



\maketitle

\begin{abstract}
The persistence of excitation (PE) condition is sufficient to ensure parameter convergence in adaptive estimation problems. Recent results on adaptive estimation in reproducing kernel Hilbert spaces (RKHS) introduce PE conditions for RKHS. This paper presents sufficient conditions for PE for the particular class of uniformly embedded reproducing kernel Hilbert spaces (RKHS) defined over smooth Riemannian manifolds. This paper also studies the implications of the sufficient condition in the case when the RKHS is finite or infinite-dimensional. When the RKHS is finite-dimensional, the sufficient condition implies parameter convergence as in the conventional analysis. On the other hand, when the RKHS is infinite-dimensional, the same condition implies that the function estimate error is ultimately bounded by a constant that depends on the approximation error in the infinite-dimensional RKHS. We illustrate the effectiveness of the sufficient condition in a practical example.
\end{abstract}

\begin{IEEEkeywords}
Parameter Convergence, RKHS, Persistence of Excitation, Adaptive Estimation.
\end{IEEEkeywords}

%
\IEEEpeerreviewmaketitle

\section{Introduction}
\label{sec_intro}
Adaptive estimation of unknown nonlinearities arising in finite-dimensional autonomous dynamical systems is now a classical, or textbook, problem. \cite{Sastry2011, Narendra2012, Ioannou} Typically, in such estimation problems, we assume that the unknown function is a linear combination of known basis functions, commonly referred to as regressors. We can guarantee parameter convergence in adaptive estimation problems by assuming that additional sufficient conditions on the regressors hold. In control theory parlance, we refer to these hypotheses as persistence of excitation (PE) conditions. It is important to observe that the definition of PE can vary depending on the type of problem or the algorithm implemented.

\subsection{PE Conditions and Convergence}
\label{ssec_PEandConv}
Some of the earliest accounts of PE conditions and their implications for parameter convergence in adaptive estimation are given in \cite{Morgan1977, Morgan1977b}. In these studies, the authors pose the problem of parameter estimation as the stability analysis of a linear time-varying (LTV), finite-dimensional system and show that the LTV systems are asymptotically stable when the PE condition holds. In some cases, we can even guarantee exponential stability. \cite{Boyd1983, Anderson1977} When only a subspace is persistently excited, it is possible to show that the parameter error eventually becomes orthogonal to that subspace. \cite{Boyd1983} In other words, the estimates converge to the projection of the unknown function onto the subspace.

In \cite{Panteley2001, Loria2003, Panteley2000a, Loria2003a, Panteley1998}, the authors generalize some of the existing notions of PE and extend the theory on the stability of LTV systems. The work by Farrell illustrates the effectiveness of local PE conditions for parameter convergence. \cite{Farrell1997} The PE condition in \cite{Novara2011} ensures the convergence of parameter estimates of systems defined by the interconnection of LTI blocks and nonlinear functions. Yuan and Wang relate the learning speeds and constants that appear in the PE definition in \cite{Yuan2011}. In \cite{Nikitin2007}, Nikitin proposes a generalized PE definition that relaxes some of the conditions imposed on conventional PE conditions. An account of parameter estimation for distributed parameter systems and the corresponding notion of PE is given in \cite{Demetriou1994, Demetriou1994a, Demetriou2006, Baumeister1997, Bohm1998, Demetriou1994b}. Recent articles on adaptive estimation in reproducing kernel Hilbert spaces (RKHS) extend the notion of partial PE to cases where the unknown function appearing in ordinary differential equations (ODEs) belongs to an infinite-dimensional RKHS. \cite{Bobade2019, jia2020a, jia2020b}

The PE condition is difficult, and sometimes impossible, to verify \emph{a priori} in practical applications. To overcome this limitation, authors have studied simpler sufficient conditions that ensure PE. A few of the earliest accounts that analyze sufficient conditions for PE are \cite{Bai1985, Boyd1986}. In these papers, the authors link the richness of the reference trajectory to the PE condition. This richness condition is much more intuitive than the PE condition. Kurdila et al. illustrate in \cite{Kurdila1994, Kurdila1995} that, in function spaces generated by radial basis functions, radial basis functions centered at points in state space that are \emph{visited regularly} are persistently excited. The work by Gorinevsky and Lu et al., in which it is shown that inputs belonging to neighborhoods of radial basis function centers are PE, illustrates a similar result. \cite{Gorinevsky1995, Lu1998} In \cite{Wang2006, Wang2009}, Wang et al. relax some of the hypotheses in \cite{Kurdila1994, Kurdila1995}. They derive a sufficient condition for PE for any recurrent trajectory contained in a regular lattice. In \cite{Bamieh2002}, Bamieh and Giarre pose a linear parameter varying identification problem as linear regression, which allows them to show that in some instances, the PE condition simplifies to an interpolation condition.

An alternative to developing sufficient conditions that ensure PE is to develop methods that ensure parameter convergence without PE. For example, Adetola and Guay prove in \cite{Adetola2008} that we can compute the unknown parameters once the regressor matrix becomes positive definite. The recent class of estimation techniques, referred to as concurrent learning, obviates the need for persistently exciting signals by using a rich collection of recorded data. \cite{Chowdhary2014, Vamvoudakis2016, Kersting2019} Song et al. show asymptotic constancy of parameter estimates without the PE condition in \cite{Song2017}. In \cite{Wang2020}, Wang et al. propose a finite-time parameter estimation technique that uses the dynamic regressor extension and mixing methods to transform the estimation problem into a series of regression models. This transformation results in parameter convergence under non-PE conditions.

In most of the studies mentioned above, the states and the parameters evolve in Euclidean spaces (a notable exception being the family of related papers \cite{Demetriou1994, Demetriou1994a, Demetriou2006, Baumeister1997, Bohm1998, Demetriou1994b}, which treat distributed parameter systems). However, for a given initial condition in many such finite-dimensional systems, the state trajectory traverses only a subdomain of Euclidean space. Recent articles on adaptive estimation in RKHS provide a framework for adaptive estimation of dynamic systems whose states evolve in more generic spaces, including embedded manifolds. \cite{Bobade2019, Kurdila2019PE} The corresponding PE conditions are given in \cite{jia2020a, jia2020b}. Guo et al. study the rate of convergence of the finite-dimensional approximations of reproducing kernel Hilbert spaces defined over manifolds in \cite{Guo2020Rates}. From an adaptive estimation perspective, this is equivalent to studying the rate at which the finite-dimensional function estimate $\hat{f}_n$ converges to the infinite-dimensional function estimate $\hat{f}$. However, to carryout the analysis in these studies, it is necessary to choose the RKHS that is persistently excited.

This requirement, in turn, suggests a need for sufficient conditions for PE that works in spaces that are more general than Euclidean spaces. In this paper, we introduce a sufficient condition for PE of RKHS defined over embedded manifolds and study its implications in both finite and infinite-dimensional cases. In the typical situation in which the analysis in this paper is applied, we assume that we are given an ODE (such as in the model problem Equation \ref{eq_plant}), and that the system admits an invariant submanifold $M$ that is regularly embedded in the state space $\mathbb{R}^d$ for some given initial condition. The sufficient condition is also applicable when, given an initial condition, the forward orbit is a subset of an invariant manifold and/or the embedded manifold is Euclidean space itself.

\subsection{Summary of New Results}
\label{ssec_newresults}
This paper extends the results in the recent papers \cite{Bobade2019, Kurdila2019PE, jia2020a, jia2020b, Guo2020Rates} in several fundamental ways. The first result states sufficient conditions that guarantee the PE condition for finite-dimensional RKHS over a manifold $M$ that is defined in terms of a finite number of kernel basis functions. While this was carried out in \cite{Kurdila1995} for radial basis functions over $\mathbb{R}^n$, here we treat the case where the native space is defined over a smooth manifold and the RKHS is generated by a continuous strictly positive definite kernel. As in \cite{Kurdila1995}, we see that the RKHS is PE if the trajectory repeatedly visits any (geodesic) neighborhoods of the kernel basis centers, and the time of visitation is bounded below in some sense. This result has direct applicability to finite-dimensional cases of the RKHS embedding methods discussed in \cite{Bobade2019, Kurdila2019PE, jia2020a, jia2020b, Guo2020Rates}. It serves as a foundation for practical choices of PE subsets and spaces, which is not addressed in these references. The second principal result of this paper is the study of the implications of the above sufficient condition when the RKHS is infinite-dimensional. We show that when the sufficient condition described above is valid, the function (parameter) error is eventually bounded above by a constant, which depends on the finite-dimensional approximation error of the infinite-dimensional RKHS. Researchers have investigated such cases for parameter estimation in Euclidean spaces using dead-zone gradient algorithm. \cite{Sanner1992b,Ioannou} The result in this paper can be considered as a generalization of this approach to reproducing kernel Hilbert spaces of functions defined over manifolds.

The organization of this paper is as follows. Section \ref{sec_adapEstRKHS} reviews background material for the new results in Sections \ref{sec_suffPE} and \ref{sec_suffCondImp}. It covers the theory of adaptive estimation in reproducing kernel Hilbert spaces and introduces two recent, different notions of the persistence of excitation. We discuss when the two notions of PE are equivalent and when we can ensure parameter convergence. In Section \ref{sec_suffPE}, we derive one of the primary results of this paper, a sufficient condition for PE in the finite-dimensional case. We show that this sufficient condition ensures convergence of parameters when the unknown function belongs to a known finite-dimensional space. In Section \ref{sec_suffCondImp}, we discuss the implications of this sufficient condition when we only know that the unknown function belongs to an infinite-dimensional space. We show that the projection (onto the persistently excited finite-dimensional subspace) of the function estimate error is bounded by a constant times the error of best approximation. Section \ref{sec_numIllus} illustrates the theory using a numerical example. Section \ref{sec_conc} concludes the paper.

%
\section{Review of Adaptive Estimation in RKHS}
\label{sec_adapEstRKHS}
\subsection{The Theory of RKHS}
A reproducing kernel Hilbert space $\mathcal{H}_X$ is a Hilbert space of functions defined on the set $X$ and that can be defined in terms of an associated continuous, positive-definite kernel $\mathcal{K}: X \times X \to \mathbb{R}$. In this paper, we assume that the kernel is strictly positive-definite. For each $\bm{x} \in X$, the kernel basis centered at $\bm{x}$, denoted $\mathcal{K}(\bm{x},\cdot)$, is a function in $\mathcal{H}_X$. Suppose $\left( \cdot, \cdot \right)_{\mathcal{H}_X}$ is the inner product associated with the space $\mathcal{H}_X$. The reproducing property of the RKHS states that for any $\bm{x} \in X$ and $f \in \mathcal{H}_X$, $(\mathcal{K}(\bm{x},\cdot), f)_{\mathcal{H}_X} = \mathcal{E}_{\bm{x}} f = f(\bm{x})$. The operator $\mathcal{E}_{\bm{x}}$ is the evaluation functional. In the context of this paper, we assume that the evaluation operator is \emph{uniformly bounded}, that is $|\mathcal{E}_{\bm{x}}f| \leq c \|f\|_{\mathcal{H}_X}$ for all $\bm{x} \in X$ and $f \in \mathcal{H}_X$ and some fixed positive constant $c$. One sufficient condition for the uniform boundedness of all the evaluation functionals is that there exists a constant $\bar{k}$ such that $\mathcal{K}(\bm{x},\bm{x}) \leq \bar{k}^2 < \infty$ for all $\bm{x} \in X$. This condition implies that the RKHS is continuously embedded in the space of continuous functions $C(X)$ defined on $X$, that is, given any function $f \in \mathcal{H}_X$, we have $\|f\|_{C(X)} \leq c \|f\|_{\mathcal{H}_X}$ for some constant $c$. Given a positive definite kernel $\mathcal{K} : X \times X \to \mathbb{R}$, we generate the associated RKHS by
\begin{equation*}
\mathcal{H}_X := \overline{ span \{ \mathcal{K}(\bm{x},\cdot) | \bm{x} \in X \}},
\end{equation*}
where the inner product satisfies $\langle\mathcal{K}(\bm{x},\cdot), \mathcal{K}(\bm{y},\cdot) \rangle_{\mathcal{H}_X} = \mathcal{K}(\bm{x},\bm{y})$. For any set $M \subseteq X$, the associated RKHS $\mathcal{H}_\Omega \subseteq \mathcal{H}_X$ is defined as
\begin{equation*}
\mathcal{H}_\Omega := \overline{ span \{ \mathcal{K}(\bm{x},\cdot) | \bm{x} \in M \subseteq X \}}.
\end{equation*}
Additionally, if $\Omega_n$ is a discrete finite set of $n$ elements in $X$, the associated RKHS is an $n$-dimensional space. In this paper, we use a subscript, as in $\Omega_n$, to describe the number of elements in a discrete finite set. In addition to the above spaces, we are also interested in the space of restrictions $R_M (\mathcal{H}_X)$, where $M \subseteq X$. We define the space $R_M (\mathcal{H}_X)$ by
\begin{align*}
    R_M(\mathcal{H}_X):=\{g:M \rightarrow \mathbb{R} \ | \ g=R_M f:=f|_M ~\forall~ f\in \mathcal{H}_X \}. 
\end{align*} 
The functions in $R_M (\mathcal{H}_X)$ are defined only on $M \subseteq X$, but those in $\mathcal{H}_X$ are defined everywhere in $X$. When $M = X$, the space $R_M (\mathcal{H}_X)$ is nothing but the space $\mathcal{H}_X$. The restricted space $R_M (\mathcal{H}_X)$ is itself an RKHS, \cite{Berlinet2011,Saitoh2016} and the associated reproducing kernel is given by 
\begin{align*}
    \mathcal{R}(\bm{x},\bm{y})=\mathcal{K}|_M(\bm{x},\bm{y})=\mathcal{K}(\bm{x},\bm{y})
\end{align*}
for all $\bm{x},\bm{y}\in M$. The kernel $\mathcal{R}$ generates the space $R_M (\mathcal{H}_X)$ just as the kernel $\mathcal{K}$ generates $\mathcal{H}_X$.

In this paper, the set $X$ represents the state space $\mathbb{R}^d$ of the plant, and the set $M$ is taken to be a smooth, Riemmanian, $k$-dimensional manifold that is regularly embedded in the state space $X$. The sets $\Omega$ and $\Omega_n$ are used to represent persistently excited subsets of $X$. The reproducing property mentioned above endows the RKHS $\mathcal{H}_X$ with many interesting properties which makes proving theorems easier. A detailed description of these properties is given in \cite{Aronszajn1950, Berlinet2011,Saitoh2016}. We describe the properties as and when we use them in this paper.

%
\subsection{RKHS Embedding for Adaptive Estimation}
In this subsection and the next, we discuss several recent results that are critical to the new results derived in Sections \ref{sec_suffPE} and \ref{sec_suffCondImp}. Interested readers are referred to \cite{Bobade2019, jia2020a, jia2020b} for more detailed discussions. Suppose we have a nonlinear system governed by the ordinary differential equation
\begin{align}
\label{eq_plant}
\dot{\bm{x}}(t) = A \bm{x}(t) + B f(\bm{x}(t)),
\end{align}
where $\bm{x}(t) \in X := \mathbb{R}^d$, $A \in \mathbb{R}^{d \times d}$ is known and Hurwitz, $B \in \mathbb{R}^d$ is known and $f : \mathbb{R}^d \to \mathbb{R}$ is the unknown (nonlinear) function, which is assumed to be an element of the RKHS $\mathcal{H}_X$. We also assume that we measure all the states $\bm{x}(t)$ of the system at each time $t \geq 0$. We define an estimator model of the form
\begin{align}
\label{eq_plntEst}
    \dot{\hat{\bm{x}}}(t) = A \hat{\bm{x}}(t) + B \hat{f}(t,\bm{x}(t)),
\end{align}
where $\hat{\bm{x}}(t) \in \mathbb{R}^d$ is the state estimate and $\hat{f}(t,\bm{x}(t))$ is the function estimate, which at each time $t$ is an element of the RKHS $\mathcal{H}_X$. Our goal is to ensure that the function estimate $\hat{f}(t)$ approaches the true function $f$ as $t \to \infty$. We use the gradient learning law, which is given by
\begin{align}
    \dot{\hat{f}}(t) = \Gamma^{-1} (B \mathcal{E}_{\bm{x}(t)})^* P (\bm{x}(t) - \hat{\bm{x}}(t)),
    \label{eq_lLaw}
\end{align}
to define the rate of change of the function estimate. In the above equation, the term $\Gamma \in \mathbb{R}$ and the notation $L^*$ represents the adjoint of the linear operator $L$. The matrix $P$ is the symmetric positive definite solution of the Lyapunov's equation $A^T P + PA = - Q$, where $Q \in \mathbb{R}^{d \times d}$ is an arbitrary symmetric positive-definite matrix. 

It is now possible to write down the error equations, which have the form
\begin{align}
    \begin{Bmatrix}
    \dot{\tilde{\bm{x}}}(t) \\ \dot{\tilde{f}}(t)
    \end{Bmatrix}
    = 
    \begin{bmatrix}
    A & B \mathcal{E}_{\bm{x}(t)} \\
    - \Gamma^{-1} (B \mathcal{E}_{\bm{x}(t)})^* P & 0
    \end{bmatrix}
    \begin{Bmatrix}
    \tilde{\bm{x}}(t) \\ \tilde{f}(t)
    \end{Bmatrix}.
    \label{eq_errEst}
\end{align}
In the above equation, the terms $\tilde{\bm{x}}(t) := \bm{x}(t) - \hat{\bm{x}}(t)$ and $\tilde{f}(t,\cdot) := f(\cdot) - \hat{f}(t,\cdot)$ represent the state and function error, respectively. The error systems, governed by the above equations, evolves in the infinite-dimensional space $\mathbb{R}^d \times \mathcal{H}_X$. Lyapunov analysis and Barbalat's lemma can be used to show that the state error $\tilde{\bm{x}}(t)$ converges to zero. \cite{jia2020a, jia2020b} However, we cannot make any claims about the function error $\tilde{f}$ without additional assumptions.

%
\subsection{Persistence of Excitation}
\label{ssec_PE}
As in the study of finite dimensional systems in \cite{Sastry2011, Narendra2012, Ioannou}, persistence of excitation conditions introduced in \cite{jia2020a, jia2020b} for RKHS embedding are sufficient to prove convergence of the function error $\tilde{f}(t) \to 0$. We discuss two notions of PE conditions.

\begin{definition}
\label{def_PE1}
(\textbf{PE $\mathcal{H}_X$-$1$}) The trajectory $\bm{x}: t \mapsto \bm{x}(t) \in \mathbb{R}^d$ persistently excites the indexing set $\Omega$ and the RKHS $\mathcal{H}_\Omega$ provided there exist positive constants $T_1, \gamma_1, \delta_1,$ and $\Delta_1$, such that for each $t \geq T_1$ and any $g \in \mathcal{H}_X$, there exists $s \in [t,t+\Delta_1]$ such that
\begin{align*}
    \left| \int_{s}^{s+\delta_1} \mathcal{E}_{\bm{x}(\tau)} g d \tau \right| \geq \gamma_1 \| P_\Omega g \|_{\mathcal{H}_X} > 0.
\end{align*}
\end{definition}

\begin{definition}
\label{def_PE2}
(\textbf{PE $\mathcal{H}_X$-$2$}) The trajectory $\bm{x}: t \mapsto \bm{x}(t) \in \mathbb{R}^d$ persistently excites the indexing set $\Omega$ and the RKHS $\mathcal{H}_\Omega$ provided there exists positive constants $T_2$, $\gamma_2$, and $\Delta_2$ such that 
\begin{align*}
    \int_t^{t+\Delta_2} \left\langle \mathcal{E}^*_{\bm{x}(\tau)} \mathcal{E}_{\bm{x}(\tau)}g,g \right\rangle_{\mathcal{H}_X} d \tau \geq \gamma_2 \| P_\Omega g \|_{\mathcal{H}_X}^2 > 0
\end{align*}
for all $t \geq T_2$ and any $g \in \mathcal{H}_X$.
\end{definition}

The space $\mathcal{H}_X$ in the notation ``PE $\mathcal{H}_X$-$1$" and ``PE $\mathcal{H}_X$-$2$" refers to the space in which the functions $g$ are contained. The operator $P_\Omega$ is the $\mathcal{H}_X$-orthogonal projection from the space $\mathcal{H}_X$ onto the closed subspace $\mathcal{H}_\Omega$. The following theorem shows that the function error converges over the PE set when the PE condition in Definition \ref{def_PE1} holds.

\begin{theorem}
\label{thm_paramConv}
If the trajectory $\bm{x} : t \mapsto \bm{x}(t)$ persistently excites the RKHS $\mathcal{H}_\Omega$ in the sense of Definition PE $\mathcal{H}_X -$ \ref{def_PE1}. Then 
\begin{align*}
    \lim_{t \to \infty} \| \tilde{\bm{x}}(t) \| = 0, \hspace{0.75in} \lim_{t \to \infty} \| P_\Omega \tilde{f}(t) \|_{\mathcal{H}_X} = 0.
\end{align*}
\end{theorem}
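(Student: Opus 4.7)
The plan is to run the standard Lyapunov/Barbalat argument first to obtain state error convergence and boundedness, and then to leverage the PE condition of Definition \ref{def_PE1} to transfer this convergence to the projection of the function error onto $\mathcal{H}_\Omega$. The candidate Lyapunov functional is
\begin{align*}
V(\tilde{\bm{x}},\tilde{f}) = \tilde{\bm{x}}^T P \tilde{\bm{x}} + \Gamma \|\tilde{f}\|_{\mathcal{H}_X}^2.
\end{align*}
A direct computation using the error system \eqref{eq_errEst} and the adjoint structure of the learning law \eqref{eq_lLaw} makes the cross terms cancel, yielding $\dot V = -\tilde{\bm{x}}^T Q \tilde{\bm{x}} \leq 0$. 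Hence $V$ is non-increasing, $\tilde{\bm{x}}$ and $\|\tilde f\|_{\mathcal{H}_X}$ are bounded, and $\tilde{\bm{x}} \in L_2$. Using the boundedness of $\tilde f$ together with the uniform boundedness of $\mathcal{E}_{\bm{x}(t)}$ (and the Hurwitz matrix $A$), $\dot{\tilde{\bm{x}}}$ is bounded, so $\tilde{\bm{x}}$ is uniformly continuous and Barbalat's lemma gives $\tilde{\bm{x}}(t)\to 0$.

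For the function-error part, the key observation is that $\|\dot{\tilde f}(t)\|_{\mathcal{H}_X} \leq c\|B\|\,\|P\|\,\|\tilde{\bm{x}}(t)\|/\Gamma \to 0$, so $\tilde f$ is \emph{slowly varying} in the sense that $\sup_{\tau\in[t,t+\Delta_1+\delta_1]}\|\tilde f(\tau)-\tilde f(t)\|_{\mathcal{H}_X} \to 0$ as $t\to\infty$. On the other hand, integrating the first row of \eqref{eq_errEst} over a window $[s,s+\delta_1]$ gives
\begin{align*}
B\int_s^{s+\delta_1} \mathcal{E}_{\bm{x}(\tau)}\tilde f(\tau)\,d\tau = \tilde{\bm{x}}(s+\delta_1)-\tilde{\bm{x}}(s) - \int_s^{s+\delta_1} A\tilde{\bm{x}}(\tau)\,d\tau,
\end{align*}
whose right-hand side tends to zero uniformly in $s\geq t$ as $t\to\infty$, because $\tilde{\bm{x}}(t)\to 0$. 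Combining this with the slow-variation estimate, the ``frozen'' integral $\int_s^{s+\delta_1}\mathcal{E}_{\bm{x}(\tau)}\tilde f(t)\,d\tau$ also vanishes uniformly in $s\in[t,t+\Delta_1]$ as $t\to\infty$.

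Finally, I apply PE $\mathcal{H}_X$-$1$ with the test function $g=\tilde f(t)\in\mathcal{H}_X$ held fixed in $t$: for every $t\geq T_1$ there exists $s(t)\in[t,t+\Delta_1]$ such that
\begin{align*}
\gamma_1\|P_\Omega \tilde f(t)\|_{\mathcal{H}_X} \leq \left|\int_{s(t)}^{s(t)+\delta_1} \mathcal{E}_{\bm{x}(\tau)}\tilde f(t)\,d\tau\right|.
\end{align*}
Since the right-hand side was just shown to converge to $0$, I conclude $\|P_\Omega \tilde f(t)\|_{\mathcal{H}_X}\to 0$, as required. The main obstacle I anticipate is the rigorous justification of the slow-variation step: one needs the uniform boundedness of the evaluation functional $\mathcal{E}_{\bm{x}}$ (guaranteed by $\mathcal{K}(\bm{x},\bm{x})\leq \bar k^2$) together with careful use of Barbalat's lemma to pass from $\tilde{\bm{x}}\in L_2\cap L_\infty$ to uniform-in-window smallness of $\tilde f(\tau)-\tilde f(t)$, so that the PE test element $g=\tilde f(t)$ can be substituted into the integral inequality without error terms spoiling the conclusion.
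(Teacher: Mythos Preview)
Your proposal is correct. The paper does not actually prove Theorem~\ref{thm_paramConv} in the text; it is stated as a review result with the proof deferred to the references \cite{jia2020a,jia2020b}. That said, your argument---Lyapunov decay giving $\dot V=-\tilde{\bm{x}}^TQ\tilde{\bm{x}}$, Barbalat's lemma for $\tilde{\bm{x}}\to 0$, then integrating the state-error ODE over a PE window $[s,s+\delta_1]$, freezing $\tilde f$ at time $t$ via the slow-variation bound $\|\dot{\tilde f}(t)\|_{\mathcal{H}_X}\lesssim\|\tilde{\bm{x}}(t)\|\to 0$, and finally applying Definition~\ref{def_PE1} with $g=\tilde f(t)$---is exactly the template the paper itself deploys in the proof of the closely related Theorem~\ref{thm_suffPEImp}. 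There the identical three-term decomposition (the frozen PE term, the $\tilde{\bm{x}}$ term, and the $\tilde f(\tau)-\tilde f(T)$ term bounded via the learning law as in \eqref{eq_thmSuffPEImpTerm2}) appears explicitly, so your approach coincides with the paper's methodology.
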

In the above theorem, we can additionally show that if $\lim_{t \to \infty} \| P_\Omega \tilde{f}(t) \|_{\mathcal{H}_X} = 0$, then $\lim_{t \to \infty} | f(\bm{x}) - \hat{f}(t,\bm{x}) | = 0$ for all $\bm{x} \in \Omega$. In fact, the convergence is uniform over the set $\Omega$ since we assume that the evaluation functional is uniformly bounded.

Before proceeding further, let us note how the above definitions and theorem simplify when the actual unknown function $f \in \mathcal{H}_{\Omega_n}$. In such cases, we can assume that the function $\hat{f}$ in the adaptive estimator equation and the functions $g$ in Definitions \ref{def_PE1} and \ref{def_PE2} are in the space $\mathcal{H}_{\Omega_n}$ and revise the definitions of PE conditions to PE $\mathcal{H}_{\Omega_n}$-$1$ and PE $\mathcal{H}_{\Omega_n}$-$2$. Since the trajectory $\bm{x}: t \mapsto \bm{x}(t) \in \mathbb{R}^d$ persistently excites the space $\mathcal{H}_{\Omega_n}$ and all the functions are in $\mathcal{H}_{\Omega_n}$, the error equations can be recast in $\mathbb{R}^d \times \mathcal{H}_{\Omega_n}$, and the projection operator $P_\Omega \equiv P_{\Omega_n}$ disappears. On the other hand, when the evolution of the state trajectory is on a manifold $M$, we can treat the above problem solely as estimation of functions over the manifold $M$. In such cases, the persistently excited set $\Omega$ is a subset of the manifold and we replace the space $\mathcal{H}_X$ and $\mathcal{H}_{\Omega_n}$ with $R_M(\mathcal{H}_X)$ and $R_M(\mathcal{H}_{\Omega_n})$, respectively, in the above theorems and definitions.

%
\subsection{Equivalence of PE conditions}
\label{ssec_PEequiv}
In the previous subsection, we discussed two different notions of PE. PE $\mathcal{H}_X$-$1$ always implies PE $\mathcal{H}_X$-$2$. 
\begin{theorem}
The PE condition in Definition \ref{def_PE1} implies the one in Definition \ref{def_PE2}.
\end{theorem}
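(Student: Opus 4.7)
The plan is to reduce both definitions to integrals of $|g(\bm{x}(\tau))|^2$ versus $\left(\int g(\bm{x}(\tau))d\tau\right)$, and then bridge them via the Cauchy--Schwarz inequality. The key preliminary observation is that by the reproducing property of $\mathcal{H}_X$, the integrand in Definition \ref{def_PE2} simplifies: for any $g \in \mathcal{H}_X$ and $\tau \geq 0$,
\begin{equation*}
\langle \mathcal{E}^*_{\bm{x}(\tau)} \mathcal{E}_{\bm{x}(\tau)} g, g \rangle_{\mathcal{H}_X} = |\mathcal{E}_{\bm{x}(\tau)} g|^2 = |g(\bm{x}(\tau))|^2.
\end{equation*}
So the conclusion of PE $\mathcal{H}_X$-$2$ is equivalent to a lower bound on $\int_t^{t+\Delta_2} |g(\bm{x}(\tau))|^2 d\tau$ by $\gamma_2\|P_\Omega g\|_{\mathcal{H}_X}^2$, while PE $\mathcal{H}_X$-$1$ gives a lower bound on $\left|\int_s^{s+\delta_1} g(\bm{x}(\tau))d\tau\right|$ by $\gamma_1\|P_\Omega g\|_{\mathcal{H}_X}$ for some $s \in [t, t+\Delta_1]$.

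Next, I would apply the Cauchy--Schwarz inequality to the PE $\mathcal{H}_X$-$1$ integral viewed as an inner product in $L^2([s,s+\delta_1])$:
\begin{equation*}
\left|\int_s^{s+\delta_1} g(\bm{x}(\tau)) \, d\tau\right|^2 \leq \delta_1 \int_s^{s+\delta_1} |g(\bm{x}(\tau))|^2 \, d\tau.
\end{equation*}
Combining this with the PE $\mathcal{H}_X$-$1$ bound yields
\begin{equation*}
\int_s^{s+\delta_1} |g(\bm{x}(\tau))|^2 \, d\tau \geq \frac{\gamma_1^2}{\delta_1} \|P_\Omega g\|_{\mathcal{H}_X}^2.
\end{equation*}

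Finally, since $[s, s+\delta_1] \subseteq [t, t+\Delta_1+\delta_1]$ and the integrand $|g(\bm{x}(\tau))|^2$ is nonnegative, this lower bound persists when we enlarge the interval of integration to $[t, t+\Delta_1+\delta_1]$. Choosing $T_2 := T_1$, $\Delta_2 := \Delta_1 + \delta_1$, and $\gamma_2 := \gamma_1^2/\delta_1$ then delivers Definition \ref{def_PE2} directly. There is no real obstacle here: the argument is essentially one application of Cauchy--Schwarz together with an interval-enlargement trick, and the only subtlety is keeping track of which constants from PE $\mathcal{H}_X$-$1$ produce which constants in PE $\mathcal{H}_X$-$2$.
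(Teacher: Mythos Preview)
Your argument is correct. The reduction of the Definition~\ref{def_PE2} integrand to $|g(\bm{x}(\tau))|^2$ via the adjoint identity, the Cauchy--Schwarz step on $[s,s+\delta_1]$, and the interval-enlargement to $[t,t+\Delta_1+\delta_1]$ all go through exactly as you describe; the only point worth making explicit is that the $s$ furnished by Definition~\ref{def_PE1} may depend on both $t$ and $g$, but since the enlarged window $[t,t+\Delta_2]$ with $\Delta_2=\Delta_1+\delta_1$ is independent of $s$, this causes no difficulty.

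As for comparison with the paper: the paper does not actually prove this theorem in-text but defers to the references \cite{jia2020a,jia2020b}. Your Cauchy--Schwarz argument is the standard route for this type of implication (it mirrors the classical finite-dimensional proof that an ``integral'' PE condition implies a ``quadratic-form'' PE condition), and there is no reason to expect the cited proof to differ materially from yours.
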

The proof of  the above theorem is given in \cite{jia2020a,jia2020b}. Now, note that the hypotheses of Theorem \ref{thm_paramConv} assumes that the PE condition in Definition \ref{def_PE1} holds. On the other hand, the sufficient condition, given in next section, implies that the PE condition in Definition \ref{def_PE2} holds. In particular, it implies that PE $\mathcal{H}_{\Omega_n}$ - $\bm{1}$ holds, where $\mathcal{H}_{\Omega_n}$ is a finite-dimensional RKHS. Thus, it is important to understand when the PE condition in Definition \ref{def_PE2} implies the PE condition in Definition \ref{def_PE1}. The following theorem from \cite{jia2020a, jia2020b} explicitly states when the two notions of PE are equal.

\begin{theorem}
If the family of functions defined by $\mathbb{U}(\bar{S}_n) = \{ g(\bm{x}(\cdot)): t \mapsto g(\bm{x}(t)) | g \in \bar{S}_n := \mathcal{H}_{\Omega_n} \text{ such that } \| g \|_{\mathcal{H}_{\Omega_n}} = 1 \}$ is uniformly equicontinuous, then the PE $\mathcal{H}_{\Omega_n}$ - \ref{def_PE2} implies PE $\mathcal{H}_{\Omega_n}$ - \ref{def_PE1}.
\end{theorem}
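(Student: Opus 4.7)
The plan is to argue that Definition~\ref{def_PE2} forces the function $\tau\mapsto g(\bm{x}(\tau))$ to attain a nontrivial absolute value on some short subinterval of $[t,t+\Delta_2]$, and then to exploit uniform equicontinuity to rule out sign changes that would make the linear integral in Definition~\ref{def_PE1} cancel. By homogeneity in $g$ it suffices to prove the inequality for $\|g\|_{\mathcal{H}_{\Omega_n}}=1$; since $g\in\mathcal{H}_{\Omega_n}$ and $P_{\Omega_n}$ is the identity on this subspace, $\|P_{\Omega_n}g\|_{\mathcal{H}_{\Omega_n}}=1$. The reproducing property rewrites $\langle \mathcal{E}_{\bm{x}(\tau)}^*\mathcal{E}_{\bm{x}(\tau)}g,g\rangle_{\mathcal{H}_X}=|g(\bm{x}(\tau))|^2$, so the hypothesis becomes $\int_t^{t+\Delta_2}|g(\bm{x}(\tau))|^2\,d\tau\geq \gamma_2$ for every $t\geq T_2$ and every unit-norm $g$.

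First I fix the width. Set $\epsilon_0 := \tfrac{1}{2}\sqrt{\gamma_2/\Delta_2}$ and, using uniform equicontinuity of $\mathbb{U}(\bar{S}_n)$, choose $\delta_1\in(0,\Delta_2]$ so that $|g(\bm{x}(\tau))-g(\bm{x}(\tau'))|<\epsilon_0$ whenever $|\tau-\tau'|\leq \delta_1$, \emph{simultaneously} for every unit-norm $g\in\mathcal{H}_{\Omega_n}$. Next I partition $[t,t+\Delta_2]$ into $N:=\lceil \Delta_2/\delta_1\rceil$ consecutive subintervals $J_1,\ldots,J_N$ of equal length $\delta_1':=\Delta_2/N\leq \delta_1$. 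By pigeonholing on the estimate above, some $J_{k_0}$ satisfies $\int_{J_{k_0}}|g(\bm{x}(\tau))|^2\,d\tau\geq \gamma_2/N$, and since $\tau\mapsto g(\bm{x}(\tau))$ is continuous (by equicontinuity) the mean value is attained, giving a point $\tau_0\in J_{k_0}$ with $|g(\bm{x}(\tau_0))|^2\geq \gamma_2/(N\delta_1')=\gamma_2/\Delta_2=(2\epsilon_0)^2$.

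Finally I propagate the bound across $J_{k_0}$. Every $\tau\in J_{k_0}$ satisfies $|\tau-\tau_0|\leq \delta_1'\leq \delta_1$, so equicontinuity yields $|g(\bm{x}(\tau))-g(\bm{x}(\tau_0))|<\epsilon_0$; combined with $|g(\bm{x}(\tau_0))|\geq 2\epsilon_0$ this forces $g(\bm{x}(\tau))$ to keep the sign of $g(\bm{x}(\tau_0))$ and to satisfy $|g(\bm{x}(\tau))|>\epsilon_0$ throughout $J_{k_0}$. Letting $s$ be the left endpoint of $J_{k_0}$, we have $s\in [t,t+\Delta_2-\delta_1']\subset [t,t+\Delta_2]$ and $\left|\int_s^{s+\delta_1'}g(\bm{x}(\tau))\,d\tau\right|=\int_s^{s+\delta_1'}|g(\bm{x}(\tau))|\,d\tau\geq \epsilon_0\delta_1'$, which is exactly Definition~\ref{def_PE1} with $T_1:=T_2$, $\Delta_1:=\Delta_2$, width $\delta_1'$, and constant $\gamma_1:=\epsilon_0\delta_1'$.

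The main obstacle is the sign-preservation step: without the uniform equicontinuity hypothesis, Definition~\ref{def_PE2} can be satisfied while $g(\bm{x}(\tau))$ oscillates so rapidly that the linear integral in Definition~\ref{def_PE1} cancels to zero. The crucial use of \emph{uniformity} is that a single width $\delta_1$ controls every unit-norm $g\in\mathcal{H}_{\Omega_n}$ at once; a $g$-dependent choice would produce $g$-dependent constants $\gamma_1,\delta_1,\Delta_1$ and the conclusion of Definition~\ref{def_PE1} would be empty.
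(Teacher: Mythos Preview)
Your argument is correct. The paper does not actually prove this theorem; immediately after stating it, the authors write ``The proof of a more general case of the above theorem is given in \cite{jia2020a}'' and move on. So there is no in-paper proof to compare against.

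That said, your approach is the natural one and almost certainly matches the cited reference in spirit: reduce to the unit sphere by homogeneity, rewrite PE-\ref{def_PE2} as a lower bound on $\int_t^{t+\Delta_2}|g(\bm{x}(\tau))|^2\,d\tau$, pigeonhole over a uniform partition of width at most the equicontinuity modulus to locate a subinterval carrying a definite share of the quadratic integral, extract a point of large absolute value, and use uniform equicontinuity to prevent sign cancellation over that subinterval. Your bookkeeping is clean: $\epsilon_0$, $\delta_1$, $N$, $\delta_1'$, and hence $\gamma_1=\epsilon_0\delta_1'$ depend only on $\gamma_2$ and $\Delta_2$, not on $g$ or $t$, which is exactly what Definition~\ref{def_PE1} requires and what you correctly emphasize in your closing paragraph. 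One cosmetic remark: you reuse the symbol $\delta_1$ for the equicontinuity modulus and then introduce $\delta_1'$ for the actual width appearing in Definition~\ref{def_PE1}; this is harmless but could be tidied.
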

The proof of a more general case of the above theorem is given in \cite{jia2020a}. It is important to understand the family of functions $\mathbb{U}(\bar{S}_n)$ are equicontinuous. A sufficient condition for this is that the unit ball $\bar{S}_n = \{g: X \mapsto \mathbb{R} \in \mathcal{H}_{\Omega_n} \text{ such that } \|g\| = 1 \}$ is uniformly equicontinuous and the state trajectory $t \mapsto \bm{x}(t)$ is uniformly continuous. If the state trajectory $t \mapsto \bm{x}(t)$ maps to a compact set $V$, then $\mathbb{U}(\bar{S}_n)$ is uniformly equicontinuous if $\bar{S}_{n}$ redefined as $\bar{S}_{n} = \{g: V \mapsto \mathbb{R} \in \mathcal{H}_{\Omega_n} \text{ such that } \|g\| = 1 \}$ is uniformly equicontinuous and the state trajectory $t \mapsto \bm{x}(t)$ is uniformly continuous. We know that $\bar{S}_{V,n}$ is uniformly equicontinuous. Thus, if the state trajectory $t \mapsto \bm{x}(t)$ is uniformly continuous and maps to a compact set, the family of functions $\mathbb{U}(\bar{S}_n)$ is uniformly equicontinuous.

%
\section{Sufficient Condition for PE}
\label{sec_suffPE}
In this section, we derive the sufficient condition for persistence of excitation of the trajectory $\bm{x} : t \mapsto \bm{x}(t)$ in the sense of the PE $\mathcal{H}_{\Omega_n}$ - \ref{def_PE2}. We assume that the states evolves in a \emph{smooth, compact, Riemmanian $k$-dimensional} manifold that is regularly embedded in $X$ and endowed with the (Riemmanian) distance function $d_M(\cdot,\cdot) : M \times M \to \mathbb{R}^+ \cup \{ 0 \}$. Note, by definition, $d_M(\bm{x},\bm{y})$ is equal to the infimum of the lengths of all the smooth curves joining $\bm{x} \in M$ and $\bm{y} \in M$. The sufficient condition is valid for the case when $f \in R_M(\mathcal{H}_{\Omega_n}) \subseteq R_M(\mathcal{H}_X)$, where $\Omega_n = \{ \bm{x}_1,\ldots,\bm{x}_n \}$ is a discrete finite set in $M$. We analyze the implications of relaxing this condition in the next section. In the following analysis, we assume that the kernel $\mathcal{R}:M \times M \to \mathbb{R}$ is a continuous, \emph{strictly} positive-definite kernel. Many kernels are strictly positive definite (Matern/Sobolev, exponential, multiquadric).



\begin{lemma}
\label{lem_knlMatBnd}
Suppose $\bm{y}_i \in M$ for $i = 1,\ldots,n$.
If
\begin{align}
\label{eq_knlMatDef}
    S(\bm{y}_1,\ldots,\bm{y}_n) :=
    \begin{pmatrix}
    \mathcal{R}(\bm{x}_1,\bm{y}_1)& \dots & \mathcal{R}(\bm{x}_n,\bm{y}_1) \\
    \vdots & \ddots & \vdots \\
    \mathcal{R}(\bm{x}_1,\bm{y}_n)& \dots & \mathcal{R}(\bm{x}_n,\bm{y}_n)
    \end{pmatrix},
\end{align} 
then there exists an $\epsilon > 0$ and a number $\theta(\epsilon,\bm{x}_1,\ldots,\bm{x}_n) > 0$ such that 
\begin{align*}
    \| S \bm{\alpha} \| \geq \theta \| \bm{\alpha} \|
\end{align*}
for all $\bm{\alpha} \in \mathbb{R}^n$ and for every collection of $\bm{y}_i$'s that satisfy $d_M(\bm{x}_i,\bm{y}_i) \leq \epsilon$ for $i = 1,\ldots,n$.
\end{lemma}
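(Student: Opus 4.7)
The plan is to regard $S(\bm{y}_1,\ldots,\bm{y}_n)$ as a continuous perturbation of the kernel (interpolation) matrix of $\mathcal{R}$ at the centers $\bm{x}_1,\ldots,\bm{x}_n$, and then to combine strict positive definiteness of that unperturbed matrix with uniform continuity of $\mathcal{R}$ on $M\times M$.

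First, I would observe that when $\bm{y}_i=\bm{x}_i$ for every $i$, the matrix $S(\bm{x}_1,\ldots,\bm{x}_n)$ coincides with the Gram matrix $K$ defined by $K_{ij}=\mathcal{R}(\bm{x}_i,\bm{x}_j)$ (using symmetry of $\mathcal{R}$). Since the centers $\bm{x}_1,\ldots,\bm{x}_n$ are the distinct elements of the finite set $\Omega_n$ and $\mathcal{R}$ is strictly positive definite on $M\times M$, the matrix $K$ is symmetric positive definite, so its smallest eigenvalue $\lambda:=\lambda_{\min}(K)$ is strictly positive. Consequently $\|K\bm{\alpha}\|\geq\lambda\|\bm{\alpha}\|$ for every $\bm{\alpha}\in\mathbb{R}^n$; this supplies the ``baseline'' lower bound I want to preserve under perturbation of the $\bm{y}_i$'s.

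Next, I would exploit continuity of $\mathcal{R}$. Because $M$ is compact, $\mathcal{R}$ is uniformly continuous on $M\times M$, so for any prescribed $\eta>0$ there exists $\epsilon>0$ such that $d_M(\bm{y},\bm{x})\leq\epsilon$ implies $|\mathcal{R}(\bm{z},\bm{y})-\mathcal{R}(\bm{z},\bm{x})|\leq\eta$ for every $\bm{z}\in M$. Applied entrywise this gives $|S_{ij}-K_{ij}|\leq\eta$ whenever $d_M(\bm{x}_i,\bm{y}_i)\leq\epsilon$ for all $i$, and hence the operator-norm bound
\[
\|S-K\|\leq\|S-K\|_F\leq n\eta
\]
holds uniformly over all admissible collections $\{\bm{y}_i\}$. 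Picking $\eta$ so that $n\eta\leq\lambda/2$ fixes the corresponding $\epsilon=\epsilon(\bm{x}_1,\ldots,\bm{x}_n)$, and the reverse triangle inequality then gives
\[
\|S\bm{\alpha}\|\geq\|K\bm{\alpha}\|-\|(S-K)\bm{\alpha}\|\geq\bigl(\lambda-n\eta\bigr)\|\bm{\alpha}\|\geq\tfrac{\lambda}{2}\|\bm{\alpha}\|,
\]
so the conclusion holds with $\theta:=\lambda/2$, a constant depending only on the centers and the chosen $\epsilon$.

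The only delicate point is the first step: I need the unperturbed matrix $K$ to be not merely positive semidefinite but \emph{strictly} positive definite, which is precisely why the hypothesis requires $\mathcal{R}$ to be a strictly positive definite kernel (not just a positive one) and why the $\bm{x}_i$'s must be distinct. Once that is in hand, the rest is a routine compactness/uniform-continuity plus matrix-perturbation argument, and no further use of the Riemannian structure beyond having a metric $d_M$ with respect to which $\mathcal{R}$ is continuous is required.
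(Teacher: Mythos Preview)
Your argument is correct and follows essentially the same approach as the paper: both proofs rest on strict positive definiteness of the unperturbed Gram matrix $K=S(\bm{x}_1,\ldots,\bm{x}_n)$ together with continuity of $\mathcal{R}$ to control the perturbed matrix. The only cosmetic difference is that the paper phrases the perturbation step as continuity of the smallest eigenvalue of $S^{T}S$ as a function of $(\bm{y}_1,\ldots,\bm{y}_n)$, choosing $\epsilon$ so that this eigenvalue stays above $\tfrac{1}{2}\lambda_{\min}(K^{T}K)$ and setting $\theta=\sqrt{\tfrac{1}{2}\lambda_{\min}(K^{T}K)}$, whereas you bound $\|S-K\|$ directly via uniform continuity and apply the reverse triangle inequality; these are interchangeable implementations of the same idea.
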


\begin{proof}
The proof of this lemma follows easily by modifications of the arguments in \cite{Kurdila1995} (which holds for radial basis functions in $\mathbb{R}^n$) to the case when the basis function is a continuous, \emph{strictly} positive-definite kernel basis function defined on a manifold. We note that the eigenvalues of the matrix $S^T S$ vary continuously with $\bm{y}_i$ for $i = 1,\ldots,n$, since the eigenvalues are continuous functions of the elements of a matrix and the map $\bm{y} \to \mathcal{R}(\bm{x},\bm{y})$ is continuous by hypothesis. Let $\lambda(\bm{y}_1,\ldots,\bm{y}_n)$ be the smallest eigenvalue of $S(\bm{y}_1,\ldots,\bm{y}_n)^T S(\bm{y}_1,\ldots,\bm{y}_n)$. Since the kernel is strictly positive definite, the smallest eigenvalue of $S(\bm{x}_1,\ldots,\bm{x}_n)^T S(\bm{x}_1,\ldots,\bm{x}_n)$ satisfies $\lambda(\bm{x}_1,\ldots,\bm{x}_n) > 0$. By continuity of eigenvalues, we choose an $\epsilon > 0$ such that 
\begin{align*}
    \lambda(\bm{y}_1,\ldots,\bm{y}_n) > \frac{1}{2} \lambda(\bm{x}_1,\ldots,\bm{x}_n) > 0
\end{align*}
whenever $d_M(\bm{x}_i,\bm{y}_i) \leq \epsilon$ for $i = 1\ldots,n$. (It is easy to see that such a choice is always possible. Since $y \mapsto \lambda(y)$ is continuous at $x$, for any $\gamma > 0$, there is an $\epsilon > 0$ such that if $d_M(x,y) < \epsilon$, then $|\lambda(y) - \lambda(x)| < \gamma$. Choose $\gamma:= \frac{1}{2} \lambda(x)$, and pick an appropriate $\epsilon > 0$. Then the smallest that $\lambda(y)$ can be is greater than $\frac{1}{2} \lambda(x)$. So, $ \lambda(y) \geq \frac{1}{2} \lambda(x) > 0$.) With this choice of $\epsilon$, finally set $\theta = \sqrt{\frac{1}{2} \lambda(\bm{x}_1,\ldots,\bm{x}_n)}$. We have 
\begin{align*}
    \| S(\bm{y}_1,\ldots,\bm{y}_n) \bm{\alpha} \|^2 \geq \lambda(\bm{y}_1, \ldots, \bm{y}_n) \bm{\alpha}^T \bm{\alpha} > \theta^2 \| \bm{\alpha} \|^2.
\end{align*}
\end{proof}

For proving the next theorem, we enforce the following additional condition on $\epsilon$ in the previous lemma. Note, if a particular $\epsilon > 0$ works in the above lemma, any smaller positive value will satisfy the lemma.

\begin{condition}
\label{cond_eps}
Let $\bm{x}_i, \bm{x}_j \in \Omega_n$ for $i, j = 1,\ldots,n$. The choice of $\epsilon$ in Lemma \ref{lem_knlMatBnd} also satisfies
\begin{align*}
    0 < \epsilon < \frac{1}{2} \min_{i\neq j} d_M(\bm{x}_i,\bm{x}_j).
\end{align*}
\end{condition}

\begin{lemma}
\label{lem_fixIntPE}
Let $I$ be a bounded, Lebesgue ($\mu$) measurable subset of $[0,\infty)$, and also let 
\begin{align*}
    I_i := \{ s \in I | d_M(\bm{x}_i,\bm{x}(t)) \leq \epsilon \text{ for } i = 1,\ldots,n \},
\end{align*}
where $\epsilon$ is as in Lemma \ref{lem_knlMatBnd} and satisfies Condition \ref{cond_eps}. If $\mu(I_i) \geq \tau_0$ for $1,\ldots,n,$ then with $\theta$ as in Lemma \ref{lem_knlMatBnd}, 
\begin{align*}
    \int_I \left( \mathcal{E}^*_{\bm{x}(\tau)} \mathcal{E}_{\bm{x}(\tau)}g,g \right)_{R_M(\mathcal{H}_X)} d\tau \geq \tau_0 \theta^2 \|\bm{\alpha}\|^2
\end{align*}
holds for any $g \in R_M(\mathcal{H}_X)$ and $\bm{\alpha} = \{ \alpha_1,\ldots,\alpha_n\}$ such that $g = \sum_{i=1}^n \alpha_i \mathcal{R}(\bm{x}_i,\cdot) $.
\end{lemma}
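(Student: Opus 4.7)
The plan is to reduce the integrand to the pointwise quantity $|g(\bm{x}(\tau))|^2$ and then combine the local lower bound from Lemma \ref{lem_knlMatBnd} with the measure lower bound $\mu(I_i) \geq \tau_0$ by an averaging argument over a product of the sets $I_i$.

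First I would use the reproducing property to rewrite $(\mathcal{E}^*_{\bm{x}(\tau)}\mathcal{E}_{\bm{x}(\tau)}g,g)_{R_M(\mathcal{H}_X)} = |g(\bm{x}(\tau))|^2$. Since $g=\sum_{i=1}^n \alpha_i \mathcal{R}(\bm{x}_i,\cdot)$, evaluating at an arbitrary point $\bm{y}\in M$ gives $g(\bm{y}) = \sum_i \alpha_i \mathcal{R}(\bm{x}_i,\bm{y})$, and the collection $\{g(\bm{y}_1),\ldots,g(\bm{y}_n)\}$ is precisely the vector $S(\bm{y}_1,\ldots,\bm{y}_n)\bm{\alpha}$ appearing in Lemma \ref{lem_knlMatBnd}. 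Next, I would note that Condition \ref{cond_eps} together with the triangle inequality in $d_M$ forces the sets $I_1,\ldots,I_n$ to be pairwise disjoint: any $s\in I_i\cap I_j$ with $i\neq j$ would give $d_M(\bm{x}_i,\bm{x}_j)\leq 2\epsilon < \min_{i\neq j}d_M(\bm{x}_i,\bm{x}_j)$, a contradiction. Consequently $\int_I |g(\bm{x}(\tau))|^2 d\tau \geq \sum_{j=1}^n \int_{I_j}|g(\bm{x}(\tau))|^2 d\tau$.

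The core step is an averaging trick. For any choice of $\tau_j\in I_j$, the points $\bm{y}_j := \bm{x}(\tau_j)$ satisfy $d_M(\bm{x}_j,\bm{y}_j)\leq\epsilon$, so Lemma \ref{lem_knlMatBnd} gives $\sum_{j=1}^n |g(\bm{x}(\tau_j))|^2 = \|S(\bm{y}_1,\ldots,\bm{y}_n)\bm{\alpha}\|^2 \geq \theta^2\|\bm{\alpha}\|^2$. Integrating both sides over $I_1\times\cdots\times I_n$ with respect to product Lebesgue measure, separating the $j$-th term from the factor $\prod_{i\neq j}\mu(I_i)$, and dividing by $\prod_i\mu(I_i)$ yields
\begin{equation*}
\sum_{j=1}^n \frac{1}{\mu(I_j)}\int_{I_j}|g(\bm{x}(\tau))|^2 d\tau \;\geq\; \theta^2\|\bm{\alpha}\|^2.
\end{equation*}
Finally, using $1/\mu(I_j)\leq 1/\tau_0$ (since $\mu(I_j)\geq\tau_0>0$) and the disjointness of the $I_j$'s gives $\int_I |g(\bm{x}(\tau))|^2 d\tau \geq \tau_0\theta^2\|\bm{\alpha}\|^2$, which is the desired inequality.

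I do not expect a serious obstacle; the main things to get right are the identification of the matrix-vector product $S\bm{\alpha}$ with the vector of evaluations $(g(\bm{y}_1),\ldots,g(\bm{y}_n))$, and the disjointness of the $I_j$'s, which is the reason Condition \ref{cond_eps} was imposed. The averaging-over-a-product step is the only nontrivial manipulation, but it is essentially a Fubini computation once the pointwise bound from Lemma \ref{lem_knlMatBnd} has been secured.
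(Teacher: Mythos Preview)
Your proof is correct and takes a genuinely different route from the paper's. The paper, after the same reduction to $\sum_j \int_{I_j}|g(\bm{x}(\tau))|^2\,d\tau$, invokes compactness and connectedness of each geodesic ball $B_\epsilon(\bm{x}_i)$ together with a generalized intermediate value theorem to produce, for each $i$, a specific point $\bm{y}_i\in B_\epsilon(\bm{x}_i)$ at which $(g(\bm{y}_i))^2$ equals the mean value $\mu(I_i)^{-1}\int_{I_i}|g(\bm{x}(\tau))|^2\,d\tau$; it then applies Lemma~\ref{lem_knlMatBnd} once to that particular $n$-tuple $(\bm{y}_1,\ldots,\bm{y}_n)$. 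Your averaging-over-the-product trick instead applies Lemma~\ref{lem_knlMatBnd} pointwise for every tuple in $I_1\times\cdots\times I_n$ and integrates, which sidesteps the IVT entirely. Your argument is slightly more robust: it needs neither connectedness of the balls nor continuity of $g$ (beyond measurability), and it avoids the minor bookkeeping issue in the paper's proof where the extrema of $g$ over a ball are tacitly identified with the extrema of $g^2$. The paper's approach, on the other hand, yields an explicit tuple $(\bm{y}_1,\ldots,\bm{y}_n)$, which can be conceptually convenient even if it is not used downstream here.
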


\begin{proof}
First, we note that
\begin{align*}
    \left\langle \mathcal{E}^*_{\bm{x}(\tau)} \mathcal{E}_{\bm{x}(\tau)}g,g \right \rangle_{R_M(\mathcal{H}_X)} = \left\langle \mathcal{E}_{\bm{x}(\tau)}g, \mathcal{E}_{\bm{x}(\tau)}g \right\rangle_\mathbb{R}
    = \left(g(\bm{x}(\tau))\right)^2. 
\end{align*}
Moreover, the sets $I_i$ are disjoint since the closed balls defined as $B_{\epsilon}(\bm{x}_i) := \{ \bm{y} \in M | d_M(\bm{x}_i,\bm{y}) \leq \epsilon \}$ centered at $x_i$ and radius $\epsilon$ do not intersect with each other when $\epsilon$ satisfies Condition \ref{cond_eps}. Furthermore, since $\cup_{i=1}^n I_i \subseteq I$, we have
\begin{align}
\label{eq_PE2lowerbnd}
    \int_I \left\langle \mathcal{E}^*_{\bm{x}(\tau)} \mathcal{E}_{\bm{x}(\tau)}g,g \right\rangle_{R_M(\mathcal{H}_X)} d\tau 
    \geq \sum_{i=1}^n \int_{I_i} \left( g(\bm{x}(\tau)) \right)^2 d\tau.
\end{align}
The closed balls $B_{\epsilon}(\bm{x}_i)$ are compact since the manifold $M$ is compact. Thus, the function $g \in R_M(\mathcal{H}_X) \subseteq C(M)$ attains its maximum and minimum at points in the manifold $M$, say $\overline{\bm{y}}_i, \underline{\bm{y}}_i \in M$, respectively. Thus, for each $i = 1,\ldots,n$, we get the inequality
\begin{align*}
    \left( g(\underline{\bm{y}}_i) \right)^2 \mu(I_i) \leq \int_{I_i} \left( g(\bm{x}(\tau)) \right)^2 d\tau \leq \left( g(\overline{\bm{y}}_i) \right)^2 \mu(I_i).
\end{align*}
By definition of $d_M$, we know that the closed ball $B_{\epsilon}(\bm{x}_i)$ is connected. Using the generalized intermediate value theorem and the hypothesis that $\mu(I_i) \geq \tau_0$, we conclude that there exists a $\bm{y}_i \in B_{\epsilon}(\bm{x}_i)$ such that
\begin{align*}
    \int_{I_i} \left( g(\bm{x}(\tau)) \right)^2 d\tau = \left( g(\bm{y}_i) \right)^2 \mu(I_i) \geq \left( g(\bm{y}_i) \right)^2 \tau_0
\end{align*}
for $i = 1,\ldots,n$. From the Inequality \ref{eq_PE2lowerbnd}, we have
\begin{align*}
    &\int_I \left\langle \mathcal{E}^*_{\bm{x}(\tau)} \mathcal{E}_{\bm{x}(\tau)}g,g \right\rangle_{R_M(\mathcal{H}_X)} d\tau 
    \\
    & \hspace{0.5in}
    \geq \sum_{i=1}^n \left( \sum_{j=1}^n \alpha_j \mathcal{R}(\bm{x}_j,\bm{y}_i) \right)^2 \tau_0
    = \| S \bm{\alpha} \|^2 \tau_0,
\end{align*}
where $S = S(\bm{y}_1,\ldots,\bm{y}_n)$ is defined as in Equation \ref{eq_knlMatDef}. Since we choose $\epsilon$ as in Lemma \ref{lem_knlMatBnd}, using the lemma gives us the desired result.
\end{proof}

The above lemma plays a direct role in the proof of the sufficient conditions for PE given below.

\begin{theorem}
\label{thm_suffPE}
Suppose that the manifold $M$ is positive invariant under the state trajectory $t \mapsto \bm{x}(t)$ and $\Omega_n := \{ \bm{x}_1,\ldots,\bm{x}_n \}$. Also suppose that the constant $\epsilon$ is chosen as in Lemma \ref{lem_knlMatBnd} and satisfies Condition \ref{cond_eps}. For every $t \geq 0$ and every $\Delta_2 > 0$, define 
\begin{align*}
    I_i := \{ s \in [t,t+\Delta_2] | d_M(\bm{x}_i,\bm{x}(t)) \leq \epsilon \text{ for } i = 1,\ldots,n \}.
\end{align*}
If there exists a $T_2 \geq 0$ and $\Delta_2 > 0$ such that for all $t \geq T_2$, $\mu(I_i)$ is bounded below by a positive constant $\tau_0 > 0$ for all $i = 1,\ldots,n$ and $t$, then the trajectory $\bm{x} : t \to \bm{x}(t)$ persistently excites the indexing set $\Omega_n$ and the RKHS $R_M(\mathcal{H}_{\Omega_n})$ in the sense of PE $R_M(\mathcal{H}_{\Omega_n})$ - \ref{def_PE2}.
\end{theorem}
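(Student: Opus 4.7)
The statement is essentially a direct packaging of Lemma \ref{lem_fixIntPE} into the form required by Definition \ref{def_PE2}, restricted to the finite-dimensional RKHS $R_M(\mathcal{H}_{\Omega_n})$. The plan is to fix an arbitrary $t \geq T_2$, set $I = [t, t + \Delta_2]$, apply Lemma \ref{lem_fixIntPE} on this interval to any $g \in R_M(\mathcal{H}_{\Omega_n})$, and then translate the resulting lower bound, which is stated in terms of the Euclidean norm of the coefficient vector $\bm{\alpha}$, into a lower bound in terms of $\|g\|_{R_M(\mathcal{H}_X)}$. Since we are working inside $R_M(\mathcal{H}_{\Omega_n})$, the projection $P_{\Omega_n} g$ coincides with $g$ itself, as the paper already observes just after Theorem \ref{thm_paramConv}, so the required inequality is simply a bound of the integral by a positive multiple of $\|g\|_{R_M(\mathcal{H}_X)}^2$.

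Concretely, I would first write any $g \in R_M(\mathcal{H}_{\Omega_n})$ uniquely as $g = \sum_{i=1}^n \alpha_i \mathcal{R}(\bm{x}_i,\cdot)$, which is possible because the strict positive-definiteness of $\mathcal{R}$ on $\Omega_n$ makes $\{\mathcal{R}(\bm{x}_i,\cdot)\}_{i=1}^n$ linearly independent. The hypothesis that $\mu(I_i) \geq \tau_0$ for each $i$ is exactly what is needed to invoke Lemma \ref{lem_fixIntPE} on $I = [t, t+\Delta_2]$, yielding
\begin{equation*}
\int_t^{t+\Delta_2} \left\langle \mathcal{E}^*_{\bm{x}(\tau)} \mathcal{E}_{\bm{x}(\tau)} g, g \right\rangle_{R_M(\mathcal{H}_X)} d\tau \geq \tau_0 \theta^2 \|\bm{\alpha}\|^2.
\end{equation*}

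Next, I would convert $\|\bm{\alpha}\|^2$ to $\|g\|^2_{R_M(\mathcal{H}_X)}$ using the reproducing property. By expanding the inner product, $\|g\|^2_{R_M(\mathcal{H}_X)} = \bm{\alpha}^T K \bm{\alpha}$, where $K$ is the Gram matrix with entries $\mathcal{R}(\bm{x}_i,\bm{x}_j)$. Strict positive-definiteness of $\mathcal{R}$ on the distinct points $\Omega_n$ implies that $K$ is symmetric positive definite, so $\|g\|^2_{R_M(\mathcal{H}_X)} \leq \lambda_{\max}(K)\|\bm{\alpha}\|^2$, i.e., $\|\bm{\alpha}\|^2 \geq \|g\|^2_{R_M(\mathcal{H}_X)} / \lambda_{\max}(K)$. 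Combining this with the bound above and using $P_{\Omega_n}g = g$ gives PE $R_M(\mathcal{H}_{\Omega_n})$-\ref{def_PE2} with the explicit constant
\begin{equation*}
\gamma_2 = \frac{\tau_0 \theta^2}{\lambda_{\max}(K)} > 0.
\end{equation*}

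The only real issue to watch out for is bookkeeping: making sure that $\epsilon$ (and hence $\theta$) supplied by Lemma \ref{lem_knlMatBnd} and Condition \ref{cond_eps} is fixed \emph{before} the trajectory-dependent quantities $T_2, \Delta_2, \tau_0$ are chosen, so that the chain of constants $\theta, \lambda_{\max}(K), \tau_0$ depends only on the fixed kernel data and the trajectory, not on the particular $g$. Because everything after Lemma \ref{lem_fixIntPE} is linear algebra on a fixed finite-dimensional space, I do not expect any real obstacle; the substantive analytic work, namely handling the geometry on $M$ and using the generalized intermediate value theorem, has already been absorbed into the preceding lemma.
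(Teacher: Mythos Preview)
Your proposal is correct and matches the paper's proof essentially step for step: fix $t\geq T_2$, apply Lemma~\ref{lem_fixIntPE} on $[t,t+\Delta_2]$, then pass from $\|\bm{\alpha}\|^2$ to $\|g\|_{R_M(\mathcal{H}_X)}^2$ via the Gram matrix of the kernel on $\Omega_n$, obtaining $\gamma_2=\tau_0\theta^2/\overline{\lambda}$ with $\overline{\lambda}=\lambda_{\max}(K)$. The paper even records the same explicit constant.
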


\begin{proof}
For a given $t \geq T_2$, we define $I:= [t, t + \Delta_2 ]$. Since $\mu(I_i) \geq \tau_0$ for $i = 1,\ldots,n$, we apply Lemma \ref{lem_fixIntPE} to get
\begin{align*}
    \int_{t}^{t + \Delta_2} \left\langle \mathcal{E}^*_{\bm{x}(\tau)} \mathcal{E}_{\bm{x}(\tau)}g,g \right\rangle_{R_M(\mathcal{H}_X)} d\tau \geq \tau_0 \theta^2 \|\bm{\alpha}\|^2.
\end{align*}
We note that the constant $\tau_0$ is independent of $t$. Thus, the above inequality is valid for all $t \geq T_2$. Given $g = \sum_{i=1}^n \alpha_i \mathcal{R}(\bm{x}_i,\cdot) \in R_M(\mathcal{H}_X)$, its norm is given by
\begin{align*}
    \|g\|^2_{R_M(\mathcal{H}_X)} &= \left\langle g,g \right\rangle_{R_M(\mathcal{H}_X)} 
    \\
    &= \left\langle \sum_{i=1}^n \alpha_i \mathcal{R}(\bm{x}_i,\cdot),\sum_{i=1}^n \alpha_i \mathcal{R}(\bm{x}_i,\cdot) \right\rangle_{R_M(\mathcal{H}_X)} 
    \\
    &= \bm{\alpha}^T S(\bm{x}_1,\ldots,\bm{x}_n) \bm{\alpha},
\end{align*}
where $S(\bm{x}_1,\ldots,\bm{x}_n)$ is defined as in Equation \ref{eq_knlMatDef} and is called the Grammian matrix. It is straightforward to see that the norm in $S(\bm{x}_1,\ldots,\bm{x}_n)$ is equivalent to the norm in $\mathbb{R}^n$ since
\begin{align*}
    \underline{\lambda} \| \bm{\alpha} \|^2 \leq \|g\|^2_{R_M(\mathcal{H}_X)} \leq \overline{\lambda} \| \bm{\alpha} \|^2,
\end{align*}
where $\underline{\lambda}$ and $\overline{\lambda}$ are the minimum and maximum eigenvalues of the Grammian matrix $S(\bm{x}_1,\ldots,\bm{x}_n)$. Note that the Grammian matrix is a symmetric positive definite matrix, and hence all eigenvalues are real and positive. Using the above equivalence of norms, we get
\begin{align*}
    \int_{t}^{t + \Delta_2} \left\langle \mathcal{E}^*_{\bm{x}(\tau)} \mathcal{E}_{\bm{x}(\tau)}g,g \right\rangle_{R_M(\mathcal{H}_X)} d\tau \geq \gamma_2 \|g\|^2_{R_M(\mathcal{H}_X)},
\end{align*}
where $\gamma_2 = \frac{\tau_0 \theta^2}{\overline{\lambda}}$.
\end{proof}

Theorem \ref{thm_suffPE} states that after a finite amount of time $T_2$, if there exists a constant $\Delta_2$ such that in any time window $[t, t + \Delta_2] \subseteq [T_2,\infty)$, the state trajectory stays in the neighborhood of each of the centers $\bm{x}_1,\ldots,\bm{x}_n$ for at least a finite amount of time $\tau_0$, then the state trajectory is persistently exciting in the sense of PE defined in the theorem. The example in Section \ref{sec_numIllus} gives an intuitive illustration of the sufficient condition.

\begin{corollary}
\label{cor_suffPE2ParamConv}
If the hypothesis of Theorem \ref{thm_suffPE} holds and the family of functions $\mathbb{U}(\bar{S}_n)$, as defined in Subsection \ref{ssec_PEequiv}, are uniformly equicontinuous, then the trajectory $\bm{x} : t \to \bm{x}(t)$ persistently excites the indexing set $\Omega_n$ and the RKHS $R_M(\mathcal{H}_{\Omega_n})$ in the sense of PE $R_M(\mathcal{H}_{\Omega_n})$ - \ref{def_PE1}. 

Furthermore, if the unknown nonlinear function $f \in R_M(\mathcal{H}_{\Omega_n})$, then
\begin{align*}
    \lim_{t \to \infty} \| \tilde{\bm{x}}(t) \| = 0, \hspace{0.75in} \lim_{t \to \infty} \| \tilde{f}(t) \|_{R_M(\mathcal{H}_{\Omega_n})} = 0.
\end{align*}
\end{corollary}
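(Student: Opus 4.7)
The plan is to chain together three results already established in the excerpt. First, Theorem \ref{thm_suffPE} applied under the stated hypotheses directly yields the PE $R_M(\mathcal{H}_{\Omega_n})$-\ref{def_PE2} condition. Second, I invoke the equivalence theorem from Subsection \ref{ssec_PEequiv}: since $\mathbb{U}(\bar{S}_n)$ is assumed uniformly equicontinuous, the PE $R_M(\mathcal{H}_{\Omega_n})$-\ref{def_PE2} condition upgrades to PE $R_M(\mathcal{H}_{\Omega_n})$-\ref{def_PE1}. This gives the first claim of the corollary with essentially no new work.

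For the convergence statement, I would appeal to Theorem \ref{thm_paramConv}, but applied in the restricted, finite-dimensional setting described at the end of Subsection \ref{ssec_PE}. Because $f \in R_M(\mathcal{H}_{\Omega_n})$, we can recast the estimator \eqref{eq_plntEst}, the learning law \eqref{eq_lLaw}, and the error system \eqref{eq_errEst} entirely in the space $\mathbb{R}^d \times R_M(\mathcal{H}_{\Omega_n})$, so that both $\hat{f}(t,\cdot)$ and $\tilde{f}(t,\cdot)$ live in this finite-dimensional RKHS throughout the evolution. In this reduced setting, the orthogonal projection $P_{\Omega_n}$ onto $R_M(\mathcal{H}_{\Omega_n})$ is the identity operator, so Theorem \ref{thm_paramConv} yields $\|\tilde{\bm{x}}(t)\| \to 0$ and $\|P_{\Omega_n}\tilde{f}(t)\|_{R_M(\mathcal{H}_{\Omega_n})} = \|\tilde{f}(t)\|_{R_M(\mathcal{H}_{\Omega_n})} \to 0$, which is precisely the second claim.

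The substantive step that needs a careful sentence or two is the reduction to the restricted space: I would point out that because the true function $f$ and the initial estimate $\hat{f}(0)$ can both be taken in $R_M(\mathcal{H}_{\Omega_n})$, and because the right-hand side of the learning law \eqref{eq_lLaw} at every instant is a scalar multiple of $B^{*}P(\bm{x}(t)-\hat{\bm{x}}(t))\,\mathcal{R}(\bm{x}(t),\cdot)$ which belongs to $R_M(\mathcal{H}_{\Omega_n})$ whenever $\bm{x}(t)\in M$ (guaranteed by positive invariance of $M$ assumed in Theorem \ref{thm_suffPE}), the finite-dimensional subspace is invariant under the flow of the error system. After that observation, everything collapses to the classical finite-dimensional PE-implies-convergence argument embedded in Theorem \ref{thm_paramConv}.

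The main obstacle, and the only place I would expect any subtlety, is exactly this invariance/consistency check that the dynamics stay in $R_M(\mathcal{H}_{\Omega_n})$, since otherwise one could worry that the projection $P_{\Omega_n}$ is genuinely needed and the second convergence statement would only hold on the projection. Everything else is bookkeeping: invoking Theorem \ref{thm_suffPE}, then the equivalence theorem, then Theorem \ref{thm_paramConv} in the restricted space.
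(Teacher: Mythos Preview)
Your overall chain---Theorem~\ref{thm_suffPE} gives PE~$R_M(\mathcal{H}_{\Omega_n})$-\ref{def_PE2}, the equicontinuity hypothesis upgrades it to PE~$R_M(\mathcal{H}_{\Omega_n})$-\ref{def_PE1}, and then Theorem~\ref{thm_paramConv} in the finite-dimensional restricted setting gives the convergence---is exactly what the paper does; its proof is a one-line reference to Theorem~\ref{thm_suffPE} and Subsections~\ref{ssec_PE} and~\ref{ssec_PEequiv}.

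There is, however, an error in your invariance argument. You claim that $\mathcal{R}(\bm{x}(t),\cdot)$ lies in $R_M(\mathcal{H}_{\Omega_n})$ whenever $\bm{x}(t)\in M$, but this is false: the span $R_M(\mathcal{H}_{\Omega_n})$ is generated only by the kernel sections centered at the $n$ discrete points of $\Omega_n$, so $\mathcal{R}(\bm{x}(t),\cdot)$ is generically \emph{not} in that subspace even though $\bm{x}(t)\in M$. The unprojected learning law~\eqref{eq_lLaw} therefore does not keep $\hat f$ inside $R_M(\mathcal{H}_{\Omega_n})$. The correct way to set up the finite-dimensional problem---and the one the paper uses in Section~\ref{sec_suffCondImp} and implicitly at the end of Subsection~\ref{ssec_PE}---is to replace~\eqref{eq_lLaw} by its projected form $\dot{\hat f}_n=\Gamma^{-1}(B\mathcal{E}_{\bm{x}(t)}\mathbb{P}_{\Omega_n})^{*}P\tilde{\bm{x}}$, or equivalently to work directly with the coordinate equations~\eqref{eq_paramLLaw} (or~\eqref{eq_paramLLaw2}). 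With that modification the invariance is automatic and your reduction goes through; without it, the ``consistency check'' you flagged as the only subtle point actually fails.
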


The proof of the above corollary follows directly from Theorem \ref{thm_suffPE} and the discussion in Subsections \ref{ssec_PE} and \ref{ssec_PEequiv}.

%
\section{Implications of the Sufficient Condition in Infinite Dimensions}
\label{sec_suffCondImp}
In the previous section, we considered the case where the unknown nonlinear function $f$ is in the finite-dimensional space $R_M(\mathcal{H}_{\Omega_n})$. In this section, we consider the case where it is only known that $f \in R_M(\mathcal{H}_X)$. Since functions in $R_M(\mathcal{H}_X)$ are defined only on the manifold $M$, we need the state trajectory $\bm{x}(t)$ to be contained in $M$ for the governing equations to make sense. The implication of this change in hypotheses is that the function estimate error is ultimately bounded above by a constant which depends on the norm of the complementary projection.

In the following analysis, we use the subscript $n$ to denote the terms associated with the finite-dimensional space $R_M(\mathcal{H}_{\Omega_n})$. Let $\mathbb{P}_{\Omega_n}$ be the projection operator from $R_M(\mathcal{H}_X)$ onto $R_M(\mathcal{H}_{\Omega_n})$. The projection operator decomposes the space $R_M(\mathcal{H}_X)$ into $R_M(\mathcal{H}_X) = R_M(\mathcal{H}_{\Omega_n}) \bigoplus R_M(\mathcal{V}_{\Omega_n})$. Note that space $R_M(\mathcal{V}_{\Omega_n})$ contains functions that are orthogonal to the functions in $R_M(\mathcal{H}_{\Omega_n})$. Using the reproducing property, it is easy to show that functions in $R_M(\mathcal{V}_{\Omega_n})$ vanish identically on the set $\Omega_n$, i.e., for all $v \in R_M(\mathcal{V}_{\Omega_n})$, and $\bm{x} \in \Omega_n$, $v(\bm{x}) = 0$. With this definition, we rewrite the plant equation given in Equation \ref{eq_plant}, in which $f \in R_M(\mathcal{H}_X)$, in the form
\begin{align*}
\dot{\bm{x}}(t) = A \bm{x}(t) + B f_n(\bm{x}(t)) + B v_n(\bm{x}(t)),
\end{align*}
where $\bm{x}(t) \in M$, $f_n = \mathbb{P}_{\Omega_n} f \in R_M(\mathcal{H}_{\Omega_n})$, $v_n \in R_M(\mathcal{V}_{\Omega_n})$ and $f = f_n + v_n$. 

For practical applications, we want estimates that are finite-dimensional. We replace the infinite-dimensional estimate $\hat{f}$ in Equation \ref{eq_plntEst} with the finite-dimensional estimate $\hat{f}_n := \hat{f}_n(t,\cdot) \in R_M(\mathcal{H}_{\Omega_n})$. The estimator equation has the form
\begin{align}
    \dot{\hat{\bm{x}}}(t) = A \hat{\bm{x}}(t) + B \hat{f}_n(t,\bm{x}(t)),
    \label{eq_fderrState}
\end{align}
where $\hat{\bm{x}}(t) \in \mathbb{R}^d$ is the finite-dimensional state estimate. We use the dead-zone gradient learning law
\begin{align}
    \dot{\hat{f}}_n(t) = \Gamma^{-1} (B \mathcal{E}_{\bm{x}(t)} \mathbb{P}_{\Omega_n})^* \tilde{\bm{x}}_D(t),
    \label{eq_fderrFunc}
\end{align}
where
\begin{align*}
    \tilde{\bm{x}}_D(t) = \tilde{\bm{x}}(t) - \Phi \sigma(\tilde{\bm{x}}(t)).
\end{align*}
In the above equation, $\Phi = \frac{\|B\| \|v_n\|_{C(U)}}{\lambda_A}$,
and the other terms are defined as in Equation \ref{eq_lLaw}. The saturation function $\sigma: \mathbb{R}^d \to \mathbb{R}^d$ is defined as
\begin{align*}
    \sigma_i(\bm{x}) = 
    \left \{
    \begin{array}{cc}
        \frac{\bm{x}_i}{\Phi} & \text{ if } \left|\frac{\bm{x}_i}{\Phi}\right| \leq 1, \\
        1 & \text{ if } \left|\frac{\bm{x}_i}{\Phi}\right| > 1
    \end{array}
    \right.
\end{align*}
for $i = 1,\ldots,d$ and $\sigma(\bm{x}) = \{ \sigma_1(\bm{x}),\ldots,\sigma_d(\bm{x}) \}^T$.
Using the reproducing property, we can show that Equation \ref{eq_fderrFunc} is equivalent to 
\begin{align}
    \dot{\hat{\bm{\alpha}}}(t) &= S(\bm{x}_1,\ldots,\bm{x}_n)^{-1} \bm{\Gamma}^{-1} \bm{\mathcal{R}}(\bm{x}_{c},\bm{x}(t)) B^T \tilde{\bm{x}}_D(t),
    \label{eq_paramLLaw}
\end{align}
where $S(\bm{x}_1,\ldots,\bm{x}_n)$ is defined as in Equation \ref{eq_knlMatDef}, $\bm{\mathcal{R}}(\bm{x}_{c},\bm{x}(t)) := \{ \bm{\mathcal{R}}(\bm{x}_1,\bm{x}(t)), \ldots, \bm{\mathcal{R}}(\bm{x}_n,\bm{x}(t)) \}^T$, and $\bm{\Gamma}:= \Gamma \mathbb{I}_{n}$ is the gain matrix. \cite{Paruchuri2020} The term $\hat{\bm{\alpha}}(t) := \{ \alpha_1(t),\ldots,\alpha_n(t) \}^T$ in the above equation is the unknown parameter that satisfies $\hat{f}_n = \sum_{i=1}^n \alpha_i \mathcal{R}(\bm{x}_i,\cdot)$. The error equations are then
\begin{align}
    \begin{Bmatrix}
        \dot{\tilde{\bm{x}}}(t) \\
        \dot{\tilde{f}}(t)
    \end{Bmatrix}
    &=
    \begin{Bmatrix}
        A \tilde{\bm{x}}(t) + B \mathcal{E}_{\bm{x}(t)} \tilde{f}(t)
        \\
        - \Gamma^{-1} (B \mathcal{E}_{\bm{x}(t)} \mathbb{P}_{\Omega_n})^* \tilde{\bm{x}}_D(t)
    \end{Bmatrix}
    \label{eq_errEst_FD}
\end{align}
where $\tilde{\bm{x}}(t) := \bm{x}(t) - \hat{\bm{x}}(t)$ and $\tilde{f}(t):= f - \hat{f}_n(t)$. If we define $\tilde{f}_n(t):= \mathbb{P}_{\Omega_n} f - \hat{f}_n = f_n - \hat{f}_n(t)$, we get $\tilde{f}(t):= \tilde{f}_n(t) + v_n$. Since the function $v_n$ is a constant, we have $\dot{\tilde{f}}(t) = \dot{\tilde{f}}_n(t)$.

The following theorem shows us that the sufficient condition given in Theorem \ref{thm_suffPE} implies boundedness of the error by a constant proportional to $\|(I - \mathbb{P}_{\Omega_n})f\|_{C(U)}$, where $U$ is a compact set in which the states are contained after a finite amount of time. The proof of the theorem is similar to that of Theorem \ref{thm_paramConv}. In the context of this theorem, we assume that the state trajectory $t \mapsto \bm{x}(t)$ is \emph{bounded} and \emph{uniformly continuous}. Generally speaking, we use both these assumptions in the proof of Theorem \ref{thm_suffPE}.

\begin{theorem}
\label{thm_suffPEImp}
Suppose that the state trajectory $\bm{x}(t) \in M$, the function $f \in R_M(\mathcal{H}_X)$, and the class of functions $\mathbb{U}(\bar{S}_n)$ defined in Subsection \ref{ssec_PEequiv} is uniformly equicontinuous. Also suppose that the constant $\epsilon$ is chosen as in Lemma \ref{lem_knlMatBnd} and satisfies Condition \ref{cond_eps}. 
If 
the sufficient condition given by Theorem \ref{thm_suffPE} holds,
and the evolution of $\hat{f}_n(t)$ is governed by Equation \ref{eq_fderrFunc},
then 
\begin{align*}
    \limsup \limits_{t\rightarrow \infty} \|\tilde{\bm{x}}(t)\| &\leq \hat{c} \| v_n \|_{C(U)}, 
    \\
    \limsup \limits_{t \to \infty} \| \tilde{f}_n(t) \| &\leq \check{c} \| v_n \|_{C(U)},
\end{align*}
where $\hat{c}:=\hat{c}(n)$ and $\check{c}:=\check{c}(n)$ are constants and $\| v_n \|_{C(U)}$ denotes the uniform norm of the function $v_n$ over the set $U = \overline{\{ \bm{x}(\tau) | \tau \geq T_1  \}}$ with $T_1$ is defined as in PE $R_M(\mathcal{H}_{\Omega_n})$-\ref{def_PE1}.
\end{theorem}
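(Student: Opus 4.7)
The plan is to combine a dead-zone Lyapunov argument for the state error with the PE bootstrap already set up in Theorem \ref{thm_suffPE} and Corollary \ref{cor_suffPE2ParamConv} to control the function error. The natural composite Lyapunov functional is
\[
V(\tilde{\bm{x}}, \tilde{f}_n) = \tilde{\bm{x}}^T P \tilde{\bm{x}} + \Gamma \|\tilde{f}_n\|^2_{R_M(\mathcal{H}_{\Omega_n})},
\]
mirroring the one implicit in the proof of Theorem \ref{thm_paramConv}, but analysed here along the modified error dynamics \ref{eq_errEst_FD} in which $v_n$ plays the role of an exogenous disturbance and the parameter update is driven by $\tilde{\bm{x}}_D$ rather than by $P\tilde{\bm{x}}$.

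First I would differentiate $V$ along \ref{eq_errEst_FD} and use $A^T P + PA = -Q$ to obtain, after collecting terms,
\[
\dot{V} \leq -\tilde{\bm{x}}^T Q \tilde{\bm{x}} + 2 \tilde{\bm{x}}^T P B\, v_n(\bm{x}(t)) + 2 (P\tilde{\bm{x}} - \tilde{\bm{x}}_D)^T B\, \mathcal{E}_{\bm{x}(t)} \tilde{f}_n.
\]
Exploiting the coordinate-wise structure of $\sigma$ together with the choice $\Phi = \|B\|\|v_n\|_{C(U)}/\lambda_A$, the last two terms can be absorbed into a $\|v_n\|_{C(U)}$-residual: outside the saturation box one has $\tilde{\bm{x}}_D \approx \tilde{\bm{x}}$ and $\dot V$ is negative definite in $\tilde{\bm{x}}$ up to a term of order $\|v_n\|_{C(U)}\|\tilde{\bm{x}}\|$; inside the box $\|\tilde{\bm{x}}\|$ is already of order $\|v_n\|_{C(U)}$. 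Standard dead-zone arguments then yield uniform boundedness of $(\tilde{\bm{x}}, \tilde{f}_n)$ and the first claim $\limsup_{t \to \infty}\|\tilde{\bm{x}}(t)\| \leq \hat{c}\|v_n\|_{C(U)}$. Along the way, boundedness of $\tilde{\bm{x}}_D$ together with \ref{eq_paramLLaw} and the invertibility of the Grammian from Lemma \ref{lem_knlMatBnd} gives $\dot{\tilde{f}}_n \in L^\infty$, so $t \mapsto \tilde{f}_n(t)$ is Lipschitz into $R_M(\mathcal{H}_{\Omega_n})$, which will be needed in the next step.

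For the function bound I would invoke Theorem \ref{thm_suffPE} together with uniform equicontinuity of $\mathbb{U}(\bar{S}_n)$ to upgrade PE $R_M(\mathcal{H}_{\Omega_n})$-\ref{def_PE2} to PE $R_M(\mathcal{H}_{\Omega_n})$-\ref{def_PE1}. For each $t \geq T_1$ the latter supplies an $s \in [t,t+\Delta_1]$ with
\[
\gamma_1 \|\tilde{f}_n(s)\|_{R_M(\mathcal{H}_{\Omega_n})} \leq \left| \int_s^{s+\delta_1} \mathcal{E}_{\bm{x}(\tau)} \tilde{f}_n(s)\, d\tau \right|.
\]
Splitting $\mathcal{E}_{\bm{x}(\tau)} \tilde{f}_n(s) = \mathcal{E}_{\bm{x}(\tau)} \tilde{f}_n(\tau) + \mathcal{E}_{\bm{x}(\tau)}[\tilde{f}_n(s) - \tilde{f}_n(\tau)]$, substituting the first piece from the state-error equation in \ref{eq_errEst_FD} (which exposes $\dot{\tilde{\bm{x}}}$, $A\tilde{\bm{x}}$ and $B v_n$), and controlling the second piece via the Lipschitz bound on $\tilde{f}_n$ over a window of length $\delta_1$, yields an inequality of the form
\[
\|\tilde{f}_n(s)\|_{R_M(\mathcal{H}_{\Omega_n})} \leq C_1 \sup_{\tau \geq s} \|\tilde{\bm{x}}(\tau)\| + C_2 \|v_n\|_{C(U)}.
\]
Passing to $\limsup$ and plugging in the bound from the previous step gives $\limsup_{t \to \infty}\|\tilde{f}_n(t)\| \leq \check{c}\|v_n\|_{C(U)}$.

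The main obstacle, I expect, is the mismatch between $P\tilde{\bm{x}}$ and $\tilde{\bm{x}}_D$ in the cross term of $\dot V$: in the clean proof of Theorem \ref{thm_paramConv} the parameter update was $\Gamma^{-1}(B\mathcal{E}_{\bm{x}})^* P \tilde{\bm{x}}$ and the cross term cancelled exactly, whereas here the cancellation is only approximate and the residual $(P\tilde{\bm{x}} - \tilde{\bm{x}}_D)^T B \mathcal{E}_{\bm{x}(t)} \tilde{f}_n$ must be carefully dominated by a term proportional to $\|v_n\|_{C(U)}\|\tilde{\bm{x}}\|$ or absorbed into the dead-zone structure. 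Carrying this out while keeping the final constants $\hat{c}$ and $\check{c}$ linear, rather than of higher power, in $\|v_n\|_{C(U)}$ is the technical heart of the argument, and it is where the specific choice $\Phi = \|B\|\|v_n\|_{C(U)}/\lambda_A$ pays off. A secondary bookkeeping issue is verifying that the state orbit really does remain in a compact $U$ so that $\|v_n\|_{C(U)}$ is finite; this follows from the compactness of $M$ and the forward invariance assumed in Theorem \ref{thm_suffPE}.
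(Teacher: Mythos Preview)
Your PE-bootstrap argument for the function error is essentially what the paper does: freeze $\tilde f_n$, apply PE $R_M(\mathcal{H}_{\Omega_n})$-\ref{def_PE1} on a window $[s,s+\delta_1]$, rewrite $\mathcal{E}_{\bm{x}(\tau)}\tilde f_n$ through the state-error equation, and bound the drift $\tilde f_n(\tau)-\tilde f_n(T)$ via the learning law and the smallness of $\tilde{\bm{x}}_D$. That part is fine.

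The gap is in the Lyapunov step. With your choice $V=\tilde{\bm{x}}^TP\tilde{\bm{x}}+\Gamma\|\tilde f_n\|^2$ the residual cross term is $2(P\tilde{\bm{x}}-\tilde{\bm{x}}_D)^T B\,\mathcal{E}_{\bm{x}(t)}\tilde f_n$, and this is \emph{not} of order $\|v_n\|_{C(U)}\|\tilde{\bm{x}}\|$ as you hope. Inside the dead zone $\tilde{\bm{x}}_D=0$ while $P\tilde{\bm{x}}\neq 0$, and even outside, $P\tilde{\bm{x}}-\tilde{\bm{x}}_D=(P-I)\tilde{\bm{x}}+\Phi\,\mathrm{sign}(\tilde{\bm{x}})$; in either case the residual is of order $\|\tilde f_n\|$ (times $\|\tilde{\bm{x}}\|$ or $\Phi$), and there is no negative-definite term in $\tilde f_n$ to absorb it. So the boundedness argument cannot close with this $V$. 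The paper sidesteps the issue by taking $V=\|\tilde{\bm{x}}_D\|^2+\Gamma\|\tilde f_n\|^2$ and, without loss of generality, $A=-\lambda_A I$ (achieved by modifying the estimator so the error dynamics have this form). Because $\tilde{\bm{x}}_D^T\dot{\tilde{\bm{x}}}_D=\tilde{\bm{x}}_D^T\dot{\tilde{\bm{x}}}$ for the componentwise dead zone, the cross term from the state part is exactly $2\tilde{\bm{x}}_D^T B\,\mathcal{E}_{\bm{x}(t)}\tilde f_n$, which cancels the one produced by the learning law \eqref{eq_fderrFunc}. What remains is $\dot V=-2\lambda_A\tilde{\bm{x}}_D^T\tilde{\bm{x}}+2\tilde{\bm{x}}_D^T Bv_n$, and the identity $\tilde{\bm{x}}_D^T\tilde{\bm{x}}=\|\tilde{\bm{x}}_D\|^2+\Phi\|\tilde{\bm{x}}_D\|_1$ together with the choice $\Phi=\|B\|\|v_n\|_{C(U)}/\lambda_A$ gives $\dot V\le -\lambda_A\|\tilde{\bm{x}}_D\|^2$. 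Barbalat then yields $\tilde{\bm{x}}_D\to 0$, hence $\limsup\|\tilde{\bm{x}}\|\le \hat c\|v_n\|_{C(U)}$, and your second step goes through as written.
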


\begin{proof}
Since the hypotheses of Corollary \ref{cor_suffPE2ParamConv} holds, the state trajectory is persistently exciting in the sense that there exist constants $T_1, \gamma_1, \delta_1$ and $\Delta_1$ such that the PE condition given in the corollary holds. Consider the Lyapunov function
\begin{align*}
    V(t)=\left \langle\tilde{\bm{x}}_D(t),\tilde{\bm{x}}_D(t) \right \rangle +  \left \langle \tilde{f}_n(t), \Gamma \tilde{f}_n(t) \right \rangle_{R_M(\mathcal{H}_X)}.
\end{align*}
The time derivative of the Lyapunov equation is 
\begin{align*}
    \dot{V}(t) &= \left\langle \tilde{\bm{x}}_D(t),\left( I - \Phi \frac{\partial \sigma}{\partial \bm{x}}(\tilde{\bm{x}}(t)) \right) \dot{\tilde{\bm{x}}}(t)\right\rangle + \left\langle \tilde{f}_n(t), \dot{\tilde{f}}_n(t) \right\rangle \\
    &= -\lambda_A \| \tilde{\bm{x}}_D(t) \|^2 - \lambda_A \Phi \| \tilde{\bm{x}}_D(t) \|_1 + \tilde{\bm{x}}_D(t)^T B v_n(\bm{x}(t)).
\end{align*}
In deriving the above equation, we assume that the matrix $A$ has the form $A = -\lambda_A I$, where $\lambda_A > 0$. (There is not loss of generality in this assumption. If $A$ does not satisfy this assumption, we can modify the estimator in Equation \ref{eq_fderrState} by replacing the term $A \hat{\bm{x}}(t)$ with $A \bm{x}(t)$ and adding the term $\lambda_A I \tilde{\bm{x}}(t)$. Then the error equations have the same form as in Equation \ref{eq_errEst_FD} and the analysis proceeds without change.)
Since $\Phi = \frac{\|B\| \|v_n\|_{C(U)}}{\lambda_A}$, we conclude that
\begin{align*}
    \dot{V}(t) \leq -\lambda_A \| \tilde{\bm{x}}_D(t) \|^2.
\end{align*}
Thus, we conclude that $\tilde{\bm{x}}_D(t)$, $\tilde{\bm{x}}(t)$ are bounded and the family of functions $\{ \tilde{f}_n(t) \}_{t \geq 0}$ is uniformly bounded.

Next note that $\dot{\tilde{\bm{x}}}(t)$ is bounded. This is evident from the equality
\begin{align*}
    \|\dot{\tilde{\bm{x}}}\| \leq \|A\|\|\tilde{\bm{x}}(t)\| + \|B\| \|\mathcal{E}_{\bm{x}(t)}\| \left( \|\tilde{f}_n(t)\| + \|v_n\| \right).
\end{align*}
Thus, $\tilde{\bm{x}}(t)$ is Lipschitz continuous in $t$, which implies that the same is uniformly continuous in $t$. This in turn implies that $\tilde{\bm{x}}_D(t)$ is uniformly continuous in $t$. Next, notice that the function $v_n$ is bounded and uniformly continuous on the set $U$, since $v_n$ is continuous and $U$ is compact. Furthermore, recall that the state trajectory is bounded and uniformly continuous in $t$. Thus, $\dot{V}(t)$ is uniformly continuous in $t$. 

Since $V(t)$ is monotonically decreasing and bounded below, we have
\begin{equation*}
    \lim_{t \to \infty} \int_{t_0}^{t} \dot{V}(t) d\tau = \lim_{t \to \infty} V(t) - V(t_0) < \infty.
\end{equation*}
Using Barbalat's lemma for $\dot{V}$, we get 
\begin{align*}
    \lim_{t \to \infty} \|\tilde{\bm{x}}_D(t)\| = 0,
\end{align*}
which implies
\begin{align*}
    \limsup \limits_{t \to \infty} \| \tilde{\bm{x}}(t) \| \leq \hat{c} \| v_n \|_{C(U)},
\end{align*}
where $\hat{c} = \frac{\| B \| \sqrt{\bar{\lambda} n} }{\lambda_A}$, $\bar{\lambda}$ is the largest eigenvalue of the Grammian matrix $S(\bm{x}_1,\ldots,\bm{x}_n)$.

Next, we turn to the proof that $\limsup \limits_{t \to \infty} \| \tilde{f}_n(t) \| \leq \check{c} \| v_n \|_{C(U)}$. Given $\varepsilon > 0$, there exists a $T$ such that for all $t \geq T$, $\| \tilde{\bm{x}}_D(t) \| < \varepsilon$ or $\| \tilde{\bm{x}}(t) \| < \hat{c} \| v_n \|_{C(U)} + \varepsilon$. Without loss of generality, select the constant $T \geq T_1$. Let $s \in [T, T+\Delta_1]$. Since we know how the state error evolves, the norm of the state error at $s+\delta_1$ is bounded below by
{
\begin{align*}
    &
    \|\tilde{\bm{x}}(s+\delta_1)\| = \left\|\tilde{\bm{x}}(s) + \int_{s}^{s+\delta_1} A\tilde{\bm{x}}(\tau) + B\mathcal{E}_{\bm{x}(\tau)} \tilde{f}(\tau) d\tau \right\|, \nonumber 
    \\
    &\hspace{0.5em}
    \geq \underbrace{\left\|\int_{s}^{s+\delta_1} B\mathcal{E}_{\bm{x}(\tau)} \tilde{f}(T) d\tau\right\|}_{\text{term 1}} - \underbrace{\left\|\tilde{\bm{x}}(s) + \int_{s}^{s+\delta_1} A\tilde{\bm{x}}(\tau) d\tau\right\|}_{\text{term 2}} \nonumber 
    \\ 
    &\hspace{4em}
    - \underbrace{\left\|\int_{s}^{s+\delta_1} B\mathcal{E}_{\bm{x}(\tau)} (\tilde{f}(\tau)-\tilde{f}(T)) d\tau\right\|}_{\text{term 3}}.
\end{align*}
}
Let us consider term 1. We have
\begin{align*}
    &\left\|\int_{s}^{s+\delta_1} B\mathcal{E}_{\bm{x}(\tau)} \tilde{f}(T) d\tau\right\| = \|B\| \left|\int_{s}^{s+\delta_1} \mathcal{E}_{\bm{x}(\tau)} \tilde{f}(T) d\tau\right|,
    \\
    &= \|B\| \left|\int_{s}^{s+\delta_1} \mathcal{E}_{\bm{x}(\tau)} \left( \tilde{f}_n(T) + v_n \right) d\tau\right|,
    \\
    &\geq \|B\| \left|\int_{s}^{s+\delta_1} \mathcal{E}_{\bm{x}(\tau)} \tilde{f}_n(T) d\tau\right| - \|B\| \left|\int_{s}^{s+\delta_1} v_n(\bm{x}(\tau)) d\tau\right|.
\end{align*}
Since the PE condition for $R_M(\mathcal{H}_{\Omega_n})$ is valid, we have
{\small
\begin{align*}
    \left\|\int_{s}^{s+\delta_1} B\mathcal{E}_{\bm{x}(\tau)} \tilde{f}(T) d\tau\right\|
    &\geq \gamma_1 \|B\| \| \tilde{f}_n(T) \| - \delta_1 \|B\| \| v_n \|_{C(U)}.
\end{align*}
}%
Now consider term 2. We bound term 2 above by
\begin{align*}
    &\left\|\tilde{\bm{x}}(s) + \int_{s}^{s+\delta_1} A\tilde{\bm{x}}(\tau) d\tau\right\| 
    \\
    & \hspace{2em}
    \leq \|\tilde{\bm{x}}(s) \| + \int_{s}^{s+\delta_1} \|A\| \| \tilde{\bm{x}}(\tau) \| d \tau,
    \\
    & \hspace{2em} 
    \leq (1 + \|A\| \delta) (\hat{c} \| v_n \|_{C(U)} + \varepsilon).
\end{align*}%
Before proceeding further, we consider the term $\tilde{f}(\tau)-\tilde{f}(T)$. Using the learning law gives us
\begin{align}
    &\|\tilde{f}(\tau)-\tilde{f}(T)\|_{R_M(\mathcal{H}_X)} 
    \notag
    \\
    & \hspace{2em}
    = \left\| \int_T^\tau \Gamma^{-1} (B \mathcal{E}_{\bm{x}(\xi)} \mathbb{P}_{\Omega_n})^* \tilde{\bm{x}}_D(\xi) d\xi \right\|_{R_M(\mathcal{H}_X)},
    \label{eq_thmSuffPEImpTerm2}
    \\
    & \hspace{2em}
    \leq \int_T^\tau \Gamma^{-1} \|B\| \|\mathcal{E}_{\bm{x}(\xi)}\| \| \mathbb{P}_{\Omega_n} \| \|\tilde{\bm{x}}_D(\xi)\| d \xi,
    \notag
    \\
    & \hspace{2em}
    \leq c_2 (\tau - T) \varepsilon,
    \notag
\end{align}
where $c_2 = \Gamma^{-1} \|B\| \|\mathcal{E}_{\bm{x}(\xi)}\|$. Thus, since $T \leq s \leq T+\Delta$, Term 3 is bounded above by
\begin{align*}
    &
    \left\|\int_{s}^{s+\delta_1} B\mathcal{E}_{\bm{x}(\tau)} (\tilde{f}(\tau)-\tilde{f}(T)) d\tau\right\|
    \\
    & \hspace{2em}
    \leq \int_s^{s+\delta} \| B \| \| \mathcal{E}_{\bm{x}(\tau)} \| \|\tilde{f}(\tau)-\tilde{f}(T)\|_{R_M(\mathcal{H}_X)} d\tau
    \leq c_3 \varepsilon,
\end{align*}
where $c_3 = \| B \| \| \mathcal{E}_{\bm{x}(\tau)} \| c_2 \left( \frac{1}{2} \delta_1^2 + \Delta_1 \delta_1 \right)$. Thus, the norm of the state error at $s+\delta_1$ is bounded below by
\begin{align*}
    &\|\tilde{\bm{x}}(s+\delta_1)\| \geq \gamma_1 \|B\| \| \tilde{f}_n(T) \| - \delta_1 \|B\| \| v_n \|_{C(U)} 
    \\
    & \hspace{0.5in}
    - (1 + \|A\| \delta_1) (\hat{c} \| v_n \|_{C(U)} + \varepsilon) - c_3 \varepsilon.
\end{align*}
Since $s+\delta_1 > s \geq T$, we know that $\|\tilde{\bm{x}}(s+\delta_1)\|< \hat{c} \| v_n \|_{C(U)} + \varepsilon$. Rearranging the terms in the above equation, we get
\begin{align*}
    \| \tilde{f}_n(T) \| < \check{c} \| v_n \|_{C(U)} + \frac{(2 + \|A\| \delta_1 + c_3)}{\gamma_1 \|B\|} \varepsilon,
\end{align*}
where $\check{c} = \frac{(2 + \|A\| \delta_1) \hat{c} + \delta_1 \|B\| }{\gamma_1 \|B\| }$. In the argument above, $T$ is such that $\| \tilde{x}_n(t) \| < \hat{c} \| v_n \|_{C(U)} + \varepsilon$ for all $t \geq T$. We can repeat the above analysis for a sequence of $\varepsilon$, $\{ \bar{\varepsilon}_k \}_{k=1}^\infty$ such that $\bar{\varepsilon}_1 > \bar{\varepsilon}_2 > \ldots$, $\lim_{k \to \infty} \bar{\varepsilon}_k \to 0$. We can find an associated sequence of $T$, $\{ \bar{T}_k \}_{k=1}^\infty$ such that $\bar{T}_1 < \bar{T}_2 < \ldots$, $\lim_{k \to \infty} \bar{T}_k \to \infty$. Note that for any $\tau$ such that $\bar{T}_k \leq \tau < \bar{T}_{k+1}$, we have
\begin{align*}
    \| \tilde{f}_n(\tau) \| < \check{c} \| v_n \|_{C(U)} + \frac{(2 + \|A\| \delta_1 + c_3)}{\gamma_1 \|B\|} \bar{\varepsilon}_k.
\end{align*}
Thus, we conclude that
\begin{align*}
    \limsup \limits_{t \to \infty} \| \tilde{f}_n(t) \| \leq \check{c} \| v_n \|_{C(U)}. 
\end{align*}
\end{proof}

Remarks on Theorem \ref{thm_suffPEImp}:
\begin{enumerate}
    \item The implication of Theorem \ref{thm_suffPEImp} agrees with our intuition. The term $\| \tilde{f}_n(t) \|$ is eventually bounded by a constant that depends on the norm of the orthogonal component $v_n = (I - \mathbb{P}_{\Omega_n}) f$ of the unknown function, the matrix $A$ and the dimension $n$. In particular, the bound depends on the uniform norm of $v_n$ on the set $U$.
    
    \item By comparing Theorems \ref{thm_paramConv} and \ref{thm_suffPEImp}, it is clear that both theorems rely on different notions of PE (PE $R_M(\mathcal{H}_X)$-2 and PE $R_M(\mathcal{H}_{\Omega_n})$-2, respectively), which leads to distinct results. The differing notions arise from the fact that the we assume that the actual function $f$ is an element of $R_M(\mathcal{H}_{\Omega_n})$, as opposed to $R_M(\mathcal{H}_X)$, while proving the sufficient condition in Theorem \ref{thm_suffPE}.
    
    \item In the above theorem, we can interpret $v_n$ as process noise. 
    
\end{enumerate}

We next study in the following corollary a new error bound that is an immediate result of the sufficient condition derived in Theorem \ref{thm_suffPE} and Theorem \ref{thm_suffPEImp}. Following the derivation of the corollary, we compare and contrast the nature of the new convergence rate with the results in \cite{Guo2020Rates}.

\begin{corollary}
\label{cor_suffPEImpLip}
Suppose that the hypothesis of Theorem \ref{thm_suffPEImp} holds, and suppose that the set $U \subseteq \cup_{i=1}^n B_\eta (\bm{x}_i)$, where $B_\eta (\bm{x}_i)$ is the closed ball of radius $\eta$ centered at $\bm{x}_i \in \Omega_n$. If the function $v_n$ is Lipschitz continuous on $\cup_{i=1}^n B_\eta (\bm{x}_i)$, then 
\begin{align*}
    \lim_{t \to \infty} \| \tilde{f}_n(t) \| \leq \check{c} L \eta,
\end{align*}
where $L$ is the Lipschitz constant and $\check{c}:=\check{c}(n)$ is a constant defined as in Theorem \ref{thm_suffPEImp}.
\end{corollary}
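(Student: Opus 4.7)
The plan is to reduce this corollary to Theorem \ref{thm_suffPEImp} by producing a sharp bound on the uniform norm $\|v_n\|_{C(U)}$ in terms of $L$ and $\eta$. The essential observation, already noted in the opening paragraph of Section \ref{sec_suffCondImp}, is that any function $v \in R_M(\mathcal{V}_{\Omega_n})$ vanishes identically on $\Omega_n$; this is a direct consequence of the reproducing property together with orthogonality to $\mathrm{span}\{\mathcal{R}(\bm{x}_i,\cdot)\}_{i=1}^n$. In particular, $v_n(\bm{x}_i) = 0$ for each center $\bm{x}_i \in \Omega_n$.

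Given this, the proof splits into two short steps. First, for an arbitrary $\bm{x} \in U$, the covering hypothesis $U \subseteq \bigcup_{i=1}^n B_\eta(\bm{x}_i)$ ensures there exists an index $i$ with $\bm{x} \in B_\eta(\bm{x}_i)$, i.e., $d_M(\bm{x},\bm{x}_i) \leq \eta$. Applying the Lipschitz estimate on $\bigcup_{i=1}^n B_\eta(\bm{x}_i)$ gives
\begin{align*}
    |v_n(\bm{x})| = |v_n(\bm{x}) - v_n(\bm{x}_i)| \leq L\, d_M(\bm{x},\bm{x}_i) \leq L\eta.
\end{align*}
Taking the supremum over $\bm{x} \in U$ yields $\|v_n\|_{C(U)} \leq L\eta$. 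Second, since the hypotheses of Theorem \ref{thm_suffPEImp} are assumed to hold, we invoke its conclusion
\begin{align*}
    \limsup_{t \to \infty} \|\tilde{f}_n(t)\| \leq \check{c} \|v_n\|_{C(U)},
\end{align*}
and substitute the bound above to obtain $\limsup_{t\to\infty} \|\tilde{f}_n(t)\| \leq \check{c} L \eta$, which is the claimed estimate.

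There is no real obstacle here; the corollary is essentially a translation of the existing bound from an abstract $\|v_n\|_{C(U)}$ term into a geometric quantity $L\eta$ that is explicit in the fill distance of the centers. The only minor subtlety to flag is internal consistency of the Lipschitz notion with the Riemannian distance $d_M$: since $B_\eta(\bm{x}_i)$ is defined in terms of $d_M$ in the earlier lemma, the constant $L$ should be interpreted as the Lipschitz constant of $v_n$ with respect to $d_M$ on the union of these geodesic balls, which is natural because $v_n$ is defined on the manifold. With that reading, the chain of inequalities above goes through verbatim and Theorem \ref{thm_suffPEImp} supplies the constant $\check{c} = \check{c}(n)$ unchanged.
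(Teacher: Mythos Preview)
Your argument is correct and matches the paper's proof essentially line for line: both use $v_n(\bm{x}_i)=0$ together with the Lipschitz bound on each $B_\eta(\bm{x}_i)$ to obtain $\|v_n\|_{C(U)}\leq L\eta$, and then invoke Theorem~\ref{thm_suffPEImp}. Your version is slightly more explicit in writing out $|v_n(\bm{x})|=|v_n(\bm{x})-v_n(\bm{x}_i)|$ and taking the supremum, and your remark about interpreting $L$ with respect to $d_M$ is apt, but the approach is identical.
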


\begin{proof}
We know that the function $v_n$ vanishes identically on the set $\Omega_n$ since $v_n \in R_M(\mathcal{V}_{\Omega_n})$, i.e., $v_n(\bm{x}_i) = 0$ for all $\bm{x}_i \in \Omega_n$. Since  $v_n$ is Lipschitz continuous, we have
\begin{align*}
    |v_n(\bm{y})| \leq L d_M(\bm{x}_i,\bm{y}) \leq L \eta,
\end{align*}
where $\bm{x}_i \in \Omega_n$, $\bm{y} \in B_\eta (\bm{x}_i)$ for all $i = 1,\ldots,n$. Since the upper bound $L\eta$ is independent of index $i$ and the hypotheses of Theorem \ref{thm_suffPEImp} hold, we have 
\begin{align*}
    \lim_{t \to \infty} \| \tilde{f}_n(t) \| \leq \check{c} L \eta.
\end{align*}
\end{proof}

Remarks on Corollary \ref{cor_suffPEImpLip}:
\begin{enumerate}
    \item The corollary shows that if the function $v_n$ is Lipschitz continuous and if $U \subseteq \cup_{i=1}^n B_\eta (\bm{x}_i)$, then the error bound for $\lim_{t \to \infty} \| \tilde{f}_n(t) \|$ depends on the radius $\eta$ of the closed balls $B_\eta(\bm{x}_i)$. It is clear that the radius $\eta$ depends on the maximum distance of the state $\bm{x}(t)$ from the kernel centers $\bm{x}_i \in \Omega_n$ for $t \geq T_1$. Thus, by choosing $T_1$ large enough, we can make the error bound small. However, we cannot make the error bound zero since the radius $\eta$ depends on the distance between the neighboring kernel centers. If the distance between neighboring kernels centers is greater than $2\eta$, then the exists a $t$ such that $\bm{x}(t) \notin U \subseteq \cup_{i=1}^n B_\eta (\bm{x}_i)$. This suggests that we can try to reduce the error bound by choosing more kernel centers that are persistently excited. However, careful study of the constant $\check{c}$ shows that it depends on the number of kernel centers $n$, $\check{c}:=\check{c}_n$. A rigorous treatment of this strategy will require the control of the product $\check{c}_n \eta_n$. 
    
    \item The notion of distance between kernel centers directly ties to the concept of fill distance $h_{\Omega_n,M}$ defined as
    \begin{align*}
        h_{\Omega_n,M} := \sup_{\bm{y}\in M}\min_{\bm{x}_i\in\Omega_n} d_M(\bm{x}_i,\bm{y}).
    \end{align*}
    It is shown in \cite{Guo2020Rates} that for certain kernels, the rate of convergence of the finite-dimensional function estimate $\hat{f}_n$ to the infinite-dimensional function estimate $\hat{f}$ depends on the fill distance. We can think of the infinite-dimensional estimate as the one that makes the error $P_\Omega \tilde{f}(t) \to 0$ as $t \to \infty$, where $\Omega$ is the PE set (that consists of infinite number of elements). By adding more centers, the finite-dimensional $\hat{f}_n(t) = P_{\Omega_n} \hat{f}(t)$ converges to the infinite-dimensional function estimate $P_\Omega \tilde{f}(t)$. This in turn implies that the bound on the error goes to zero as we add more centers.

    \item In Corollary \ref{cor_suffPEImpLip}, we assumes that the function $v_n$ is Lipschitz continuous on the set $U$. This condition is equivalent to assuming that the change of the function $v_n$ is constrained in the set $U$. We can come up with conditions that ensure Lipschitz continuity in a variety of ways. Suppose that the kernel generates an RKHS $\mathcal{H}_X$ that is embedded in a Sobolev space $W^{s,2}(X)$. A well known example of such a kernel is the Sobolev-Matern kernel. If the Sobolev space is of high enough order, Sobolev embedding theorem implies that the space $W^{s,2}(X)$ is embedded in the Holder space $C^{(1,0)}(X)$. We know that functions in $C^{(1,0)}(X)$ are globally Lipschitz. This implies that the function $v_n$ is Lipschitz continuous.
\end{enumerate}

\section{Numerical Example}
\label{sec_numIllus}

To interpret and evaluate the implications of the sufficient condition, we consider the undamped, unforced version of the nonlinear piezoelectric oscillator studied in \cite{Paruchuri2020}. We show that the sufficient condition implies ultimate boundedness of the function error estimate when we implement a gradient learning law based adaptive estimator. The governing equations of this oscillator have the form

\begin{align}
\begin{split}
    \begin{Bmatrix}
    \dot{x}_1(t) \\ \dot{x}_2(t)
    \end{Bmatrix}
    &=
    \underbrace{\begin{bmatrix}
    0 & 1 \\
    -\frac{\hat{K}}{M} & 0
    \end{bmatrix}}_{A}
    \underbrace{
    \begin{Bmatrix}
    x_1(t) \\ x_2(t)
    \end{Bmatrix}
    }_{\bm{x}(t)}
    \\
    & \hspace{0.5in}
    +
    \underbrace{\begin{Bmatrix}
    0 \\ 1
    \end{Bmatrix}}_{B}
    \underbrace{ 
    \left(
    - \frac{\hat{K}_{N_1}}{M} x_1^3(t) - \frac{\hat{K}_{N_2}}{M} x_1^5(t) 
    \right)
    }_{f(\bm{x}(t))},
\end{split}
\label{eq_piezomodel}
\end{align}
where $M, \hat{K} $ are the modal mass and modal stiffness of the piezoelectric oscillator, respectively. The variables $\hat{K}_{N_1}, \hat{K}_{N_2}$ are constants derived from nonlinear piezoelectric constitutive laws. \cite{Paruchuri2020} The states $x_1(\cdot)$ and $x_2(\cdot)$ are the modal displacement and modal velocity, respectively. Typically, the two states are not of the same order of magnitude, which inspires the use of anisotropic kernel functions, i.e. those that are elongated in one direction. However, equivalently, it is much easier to introduce a scaling factor for the one of the states. We substitute $x_1(t) = s \tilde{x}_1(t)$ in the governing equations, where $s$ is the scaling factor, and redefine $\bm{x}(t) := [\tilde{x}_1(t), x_2(t)]^T$. For our simulations, we choose $M = 0.9745$, $\hat{K} = 329.9006$, $\hat{K}_{N_1} = -1.2901e+05$, $\hat{K}_{N_2} = 1.2053e+09$ and $s = 0.02$. Furthermore, we choose the initial condition $\bm{x}_0 = [\tilde{x}_1(0), x_2(0)]^T = [0.05,0]^T$.

Figure \ref{fig_suffCondStates} shows the evolution of the states with time. It is clear that the positive limit set for the selected initial condition is a smooth, compact, Riemmanian, $1$-dimensional manifold embedded in $X = \mathbb{R}^2$. In our simulations, we use the RKHS generated by the Sobolev-Matern $3,2$ kernel, which has the form
{\small
\begin{align}
    \mathcal{R}_{3,2}(\bm{x},\bm{y}) &= \left(1 + \frac{\sqrt{3} \|\bm{x}-\bm{y}\|}{l}\right)\exp{\left(-\frac{\sqrt{3} \|\bm{x}-\bm{y}\|}{l}\right)},
    \label{eq_SobMatKern}
\end{align}
}%
where $l$ is the scaling factor of length. \cite{Rasmussen2003} 

The adaptive estimation equations are given by Equation \ref{eq_fderrState}, and
\begin{align}
    \dot{\hat{f}}_n(t) = \Gamma^{-1} (B \mathcal{E}_{\bm{x}(t)})^* P (\bm{x}(t) - \hat{\bm{x}}(t)).
\end{align}
Notice that the above equation specifies the derivative of the function estimate. Using the reproducing property, we can show that this evolution law is equivalent to
\begin{align}
    \dot{\hat{\bm{\alpha}}}(t) &= S(\bm{x}_1,\ldots,\bm{x}_n)^{-1} \bm{\Gamma}^{-1} \bm{\mathcal{R}}(\bm{x}_{c},\bm{x}(t)) B^T P \tilde{\bm{x}}(t),
    \label{eq_paramLLaw2}
\end{align}
where all the terms are defined as in Equation \ref{eq_paramLLaw}.
To build the adaptive estimate, we fix $n$, then choose kernel centers $\bm{x}_1,\ldots,\bm{x}_n$ along with the gain parameter $\Gamma$, and integrate Equations \ref{eq_fderrState} and \ref{eq_paramLLaw2}. 

Figure \ref{fig_suffCondStates} depicts the state evolution with time as well as the positive limit set of our example. It is clear from the figure that the state trajectory is uniformly continuous. Our goal is to choose $n$ kernel centers $\bm{x}_1,\ldots,\bm{x}_n$ that are persistently excited. 
First let us note that the trajectory is periodic. Set $\Delta_2 = 2 t_p$, where $t_p$ is the period of the state trajectory. Consider an arbitrary point $\bm{x}_1$ in the positive limit set. Consider the window $I_p = [t,t+ 2t_p]$ for any arbitrary $t \geq 0$. It is clear that the time spent by the state trajectory in $B_\epsilon(\bm{x}_1)$ during any window $I_p$ is bounded below by a constant. In Figure \ref{fig_suffCondStates}, consider the (cyan) ball in the phase plane and any part of the state trajectory that is contained in a time window of $2 t_p$. It is clear that the time spent by the trajectory in this ball is bounded below. Thus, using Corollary \ref{cor_suffPE2ParamConv}, we conclude that the point $\bm{x}_1$ is persistently excited. We repeat this analysis until $n$ points are determined. For this specific problem, any discrete finite number of points in the positive limit set are persistently excited. Note that in our previous analysis, we did not explicitly calculate the radius $\epsilon$. However, the above analysis is valid for a ball of any positive radius centered at a point in the positive limit set. For a point outside the positive limit set, we need explicit knowledge of $\epsilon$ that is as in Lemma \ref{lem_knlMatBnd} and satisfies Condition \ref{cond_eps}. 

In the above analysis, we treat the state trajectory as elements contained in $\mathbb{R}^2$. However, the state trajectory is contained in the positive limit set, which is a smooth, compact, Riemmanian $1$-dimensional manifold $M$ embedded in $X = \mathbb{R}^2$. We can treat the problem as evolution on a manifold and restrict the Hilbert space of function $\mathcal{H}_X$ to the manifold. Analysis similar to the one given above holds in this case. The primary difference is that we consider closed balls that are contained in the one-dimensional manifold $M$ as opposed to ones contained in $\mathbb{R}^2$. We can determine the persistently excited points in $M$ and combine our analysis given in \cite{Guo2020Rates} to determine approximation rates of convergence.

Figure \ref{fig_Ex1Unif} depicts the pointwise error $|f(\bm{x}) - \hat{f}_n(t_e, \bm{x})|$ after running the adaptive estimator for $t_e = 150$ seconds with $50$ kernel centers initialized at $\alpha_i(0) = 0.001$ for all $i = 1,\ldots,50$. In our simulations, we set $\Gamma = 0.001$ and $l = 0.005$. Note that the function $f(\bm{x})$ in Equation \ref{eq_piezomodel} is clearly not in the space of $\mathcal{H}_{\Omega_{50}}$, where $\Omega_{50}$ is the set of $50$ kernel centers in the positive limit set denoted by the marker $*$ in Figure \ref{fig_Ex1Unif}. No linear combination of kernels, given by Equation \ref{eq_SobMatKern}, centered at points in $\Omega_{50}$ will be equation to $f(\bm{x})$. Thus, based on our analysis in Section \ref{sec_suffCondImp}, we can only guarantee boundedness of the asymptotic function error in the neighborhood of the positive limit set. Figure \ref{fig_Ex1Unif} clearly shows that the pointwise error is bounded around the positive limit set. Note, in our theorems imply convergence in the $R_M(H_X)$ norm. However, in an RKHS, convergence in RKHS norm implies pointwise convergence. In fact, since we consider only RKHS that are uniformly bounded, convergence in RKHS norm implies uniform convergence.

\begin{figure}
\centering
\includegraphics[scale = 0.65]{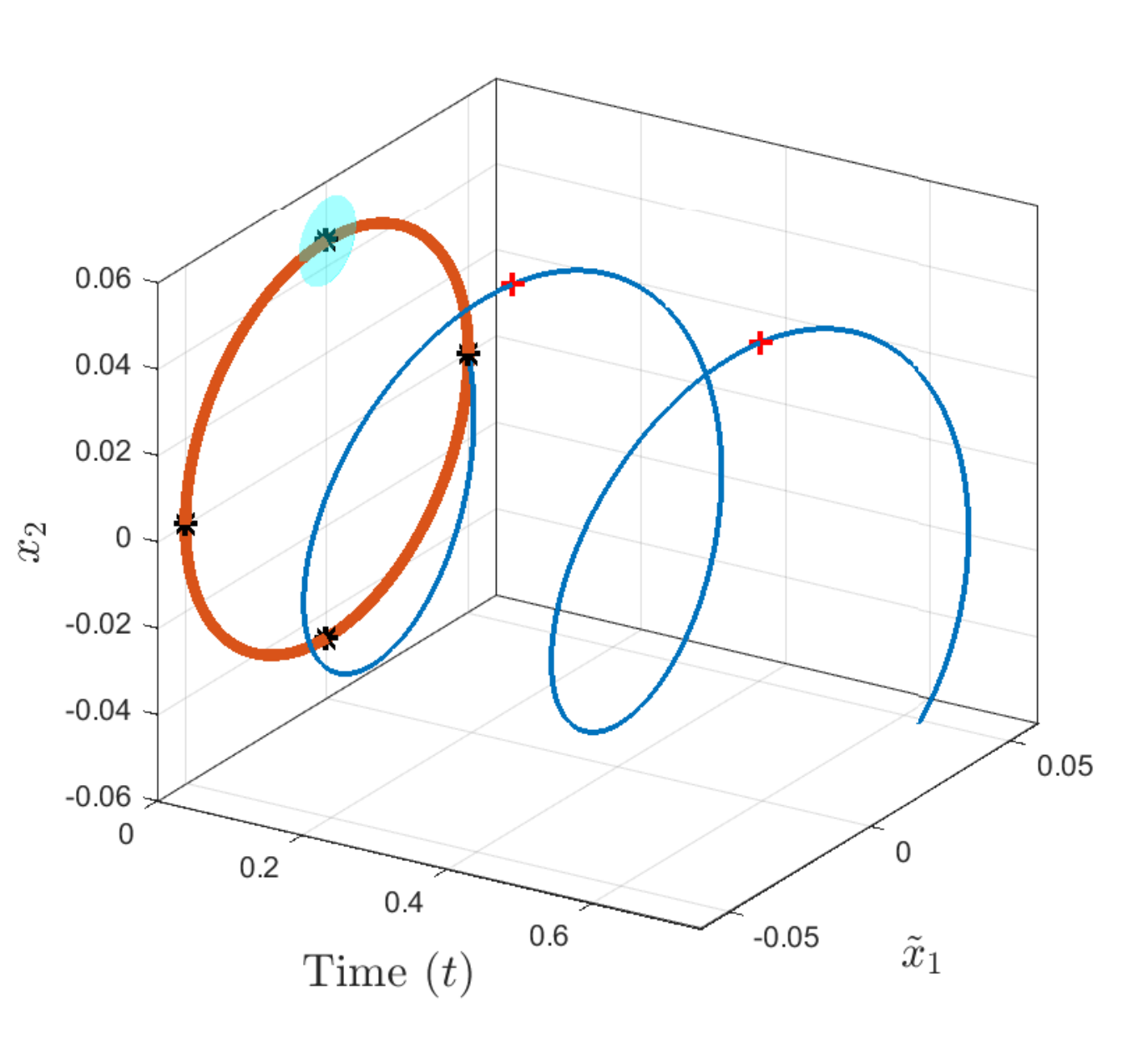}
\caption{State trajectory of the nonlinear system governed by Equation \ref{eq_piezomodel}, when the initial condition is $\bm{x}_0 = [0.05,0]^T$. The red loop is the positive limit set. The cyan circle represents the closed ball centered at the point depicted by marker $*$ in the phase plane. Marker $+$ represents the point depicted by $*$ in the state trajectory.}
\label{fig_suffCondStates}
\end{figure}

\begin{figure}
\centering
\includegraphics[scale = 0.65]{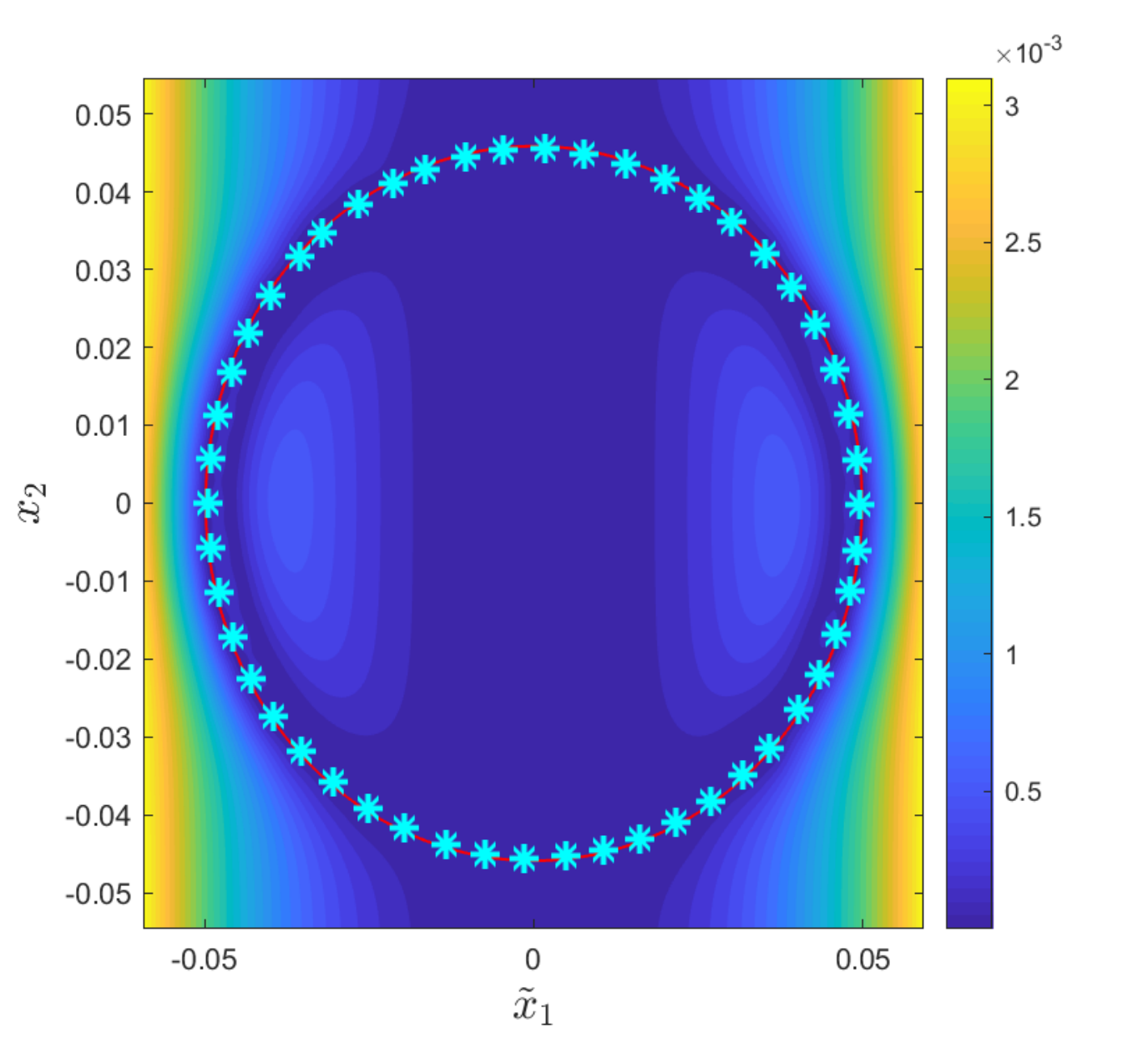}
\caption{Pointwise error $|f(\bm{x}) - \hat{f}_n(t_e, \bm{x})|$. The marker $*$ and the red line represent the kernel centers and the limit set $\Omega$, respectively.}
\label{fig_Ex1Unif}
\end{figure}

\section{Conclusion}
\label{sec_conc}
In this paper, we have derived a sufficient condition for different notions of PE in RKHS defined over embedded manifolds. This sufficient condition is valid for RKHS generated by continuous, strictly positive definite kernels. We have studied the implications of the sufficient condition in the case when the RKHS is finite or infinite-dimensional. When the unknown function resides in a finite-dimensional RKHS, the sufficient condition implies convergence of function error estimate. In the more general case when we only know that the unknown function resides in an infinite-dimensional RKHS, the sufficient conditions implies ultimate boundedness of the function estimate error by a constant that depends on the approximation error. Finally, the numerical example has illustrated the practicality of the sufficient condition.

\bibliographystyle{IEEEtran}
\bibliography{j_PE_suff_2020}

\begin{thebibliography}{10}
\providecommand{\url}[1]{#1}
\csname url@samestyle\endcsname
\providecommand{\newblock}{\relax}
\providecommand{\bibinfo}[2]{#2}
\providecommand{\BIBentrySTDinterwordspacing}{\spaceskip=0pt\relax}
\providecommand{\BIBentryALTinterwordstretchfactor}{4}
\providecommand{\BIBentryALTinterwordspacing}{\spaceskip=\fontdimen2\font plus
\BIBentryALTinterwordstretchfactor\fontdimen3\font minus
  \fontdimen4\font\relax}
\providecommand{\BIBforeignlanguage}[2]{{%
\expandafter\ifx\csname l@#1\endcsname\relax
\typeout{** WARNING: IEEEtran.bst: No hyphenation pattern has been}%
\typeout{** loaded for the language `#1'. Using the pattern for}%
\typeout{** the default language instead.}%
\else
\language=\csname l@#1\endcsname
\fi
#2}}
\providecommand{\BIBdecl}{\relax}
\BIBdecl

\bibitem{Sastry2011}
S.~Sastry and M.~Bodson, \emph{{Adaptive control: stability, convergence and
  robustness}}.\hskip 1em plus 0.5em minus 0.4em\relax Courier Corporation,
  2011.

\bibitem{Narendra2012}
K.~S. Narendra and A.~M. Annaswamy, \emph{{Stable adaptive systems}}.\hskip 1em
  plus 0.5em minus 0.4em\relax Courier Corporation, 2012.

\bibitem{Ioannou}
P.~A. Ioannou and J.~Sun, \emph{{Robust Adaptive Control}}.\hskip 1em plus
  0.5em minus 0.4em\relax Dover Publications Inc., 1996.

\bibitem{Morgan1977}
\BIBentryALTinterwordspacing
A.~Morgan and K.~Narendra, ``{On the Uniform Asymptotic Stability of Certain
  Linear Nonautonomous Differential Equations},'' \emph{SIAM Journal on Control
  and Optimization}, vol.~15, no.~1, pp. 5--24, jan 1977. [Online]. Available:
  \url{https://doi.org/10.1137/0315002}
\BIBentrySTDinterwordspacing

\bibitem{Morgan1977b}
\BIBentryALTinterwordspacing
A.~P. Morgan and K.~S. Narendra, ``{On the Stability of Nonautonomous
  Differential Equations $ \dot{x} = [A + B(t)]x$, with Skew Symmetric Matrix
  $B(t)$},'' \emph{SIAM Journal on Control and Optimization}, vol.~15, no.~1,
  pp. 163--176, jan 1977. [Online]. Available:
  \url{https://doi.org/10.1137/0315013}
\BIBentrySTDinterwordspacing

\bibitem{Boyd1983}
S.~Boyd and S.~Sastry, ``{On parameter convergence in adaptive control},''
  \emph{Systems and Control Letters}, vol.~3, no.~6, pp. 311--319, dec 1983.

\bibitem{Anderson1977}
B.~Anderson, ``{Exponential stability of linear equations arising in adaptive
  identification},'' \emph{IEEE Transactions on Automatic Control}, vol.~22,
  no.~1, pp. 83--88, 1977.

\bibitem{Panteley2001}
E.~Panteley, A.~Loria, and A.~Teel, ``{Relaxed persistency of excitation for
  uniform asymptotic stability},'' \emph{IEEE Transactions on Automatic
  Control}, vol.~46, no.~12, pp. 1874--1886, 2001.

\bibitem{Loria2003}
\BIBentryALTinterwordspacing
A.~Loria, E.~Panteley, D.~Popovic, and A.~R. Teel, ``{Persistency of excitation
  for uniform convergence in nonlinear control systems},'' jan 2003. [Online].
  Available: \url{https://arxiv.org/abs/math/0301335}
\BIBentrySTDinterwordspacing

\bibitem{Panteley2000a}
E.~Panteley and A.~Loria, ``{Uniform exponential stability for families of
  linear time-varying systems},'' in \emph{Proceedings of the 39th IEEE
  Conference on Decision and Control (Cat. No.00CH37187)}, vol.~2, 2000, pp.
  1948--1953 vol.2.

\bibitem{Loria2003a}
A.~Loria, R.~Kelly, and A.~R. Teel, ``{Uniform parametric convergence in the
  adaptive control of manipulators: a case restudied},'' in \emph{2003 IEEE
  International Conference on Robotics and Automation (Cat. No.03CH37422)},
  vol.~1, 2003, pp. 1062--1067 vol.1.

\bibitem{Panteley1998}
\BIBentryALTinterwordspacing
E.~Panteley and A.~Loria, ``{On global uniform asymptotic stability of
  nonlinear time-varying systems in cascade},'' \emph{Systems {\&} Control
  Letters}, vol.~33, no.~2, pp. 131--138, 1998. [Online]. Available:
  \url{http://www.sciencedirect.com/science/article/pii/S0167691197001199}
\BIBentrySTDinterwordspacing

\bibitem{Farrell1997}
J.~A. Farrell, ``{Persistence of excitation conditions in passive learning
  control},'' \emph{Automatica}, vol.~33, no.~4, pp. 699--703, 1997.

\bibitem{Novara2011}
C.~Novara, T.~Vincent, K.~Hsu, M.~Milanese, and K.~Poolla, ``{Parametric
  identification of structured nonlinear systems},'' \emph{Automatica},
  vol.~47, no.~4, pp. 711--721, apr 2011.

\bibitem{Yuan2011}
\BIBentryALTinterwordspacing
C.~Yuan and C.~Wang, ``{Persistency of excitation and performance of
  deterministic learning},'' \emph{Systems {\&} Control Letters}, vol.~60,
  no.~12, pp. 952--959, 2011. [Online]. Available:
  \url{http://www.sciencedirect.com/science/article/pii/S0167691111001770}
\BIBentrySTDinterwordspacing

\bibitem{Nikitin2007}
\BIBentryALTinterwordspacing
S.~Nikitin, ``{Generalized Persistency of Excitation},'' \emph{International
  Journal of Mathematics and Mathematical Sciences}, vol. 2007, p. 69093, 2007.
  [Online]. Available: \url{https://doi.org/10.1155/2007/69093}
\BIBentrySTDinterwordspacing

\bibitem{Demetriou1994}
\BIBentryALTinterwordspacing
M.~A. Demetriou and I.~G. Rosen, ``{Adaptive identification of second-order
  distributed parameter systems},'' \emph{Inverse Problems}, vol.~10, no.~2,
  pp. 261--294, 1994. [Online]. Available:
  \url{http://dx.doi.org/10.1088/0266-5611/10/2/006}
\BIBentrySTDinterwordspacing

\bibitem{Demetriou1994a}
------, ``{On the persistence of excitation in the adaptive estimation of
  distributed parameter systems},'' \emph{IEEE Transactions on Automatic
  Control}, vol.~39, no.~5, pp. 1117--1123, 1994.

\bibitem{Demetriou2006}
\BIBentryALTinterwordspacing
M.~{A. Demetriou} and F.~Fahroo, ``{Model reference adaptive control of
  structurally perturbed second-order distributed parameter systems},''
  \emph{International Journal of Robust and Nonlinear Control}, vol.~16,
  no.~16, pp. 773--799, nov 2006. [Online]. Available:
  \url{https://doi.org/10.1002/rnc.1100}
\BIBentrySTDinterwordspacing

\bibitem{Baumeister1997}
\BIBentryALTinterwordspacing
J.~Baumeister, W.~Scondo, M.~A. Demetriou, and I.~G. Rosen, ``{On-Line
  Parameter Estimation for Infinite-Dimensional Dynamical Systems},''
  \emph{SIAM Journal on Control and Optimization}, vol.~35, no.~2, pp.
  678--713, mar 1997. [Online]. Available:
  \url{https://doi.org/10.1137/S0363012994270928}
\BIBentrySTDinterwordspacing

\bibitem{Bohm1998}
\BIBentryALTinterwordspacing
M.~B{\"{o}}hm, M.~A. Demetriou, S.~Reich, and I.~G. Rosen, ``{Model Reference
  Adaptive Control of Distributed Parameter Systems},'' \emph{SIAM Journal on
  Control and Optimization}, vol.~36, no.~1, pp. 33--81, jan 1998. [Online].
  Available: \url{https://doi.org/10.1137/S0363012995279717}
\BIBentrySTDinterwordspacing

\bibitem{Demetriou1994b}
\BIBentryALTinterwordspacing
M.~A. Demetriou, ``{Adaptive parameter estimation of abstract parabolic and
  hyperbolic distributed parameter systems.}'' Ph.D. dissertation, University
  of Southern California, 1994. [Online]. Available:
  \url{http://digitallibrary.usc.edu/cdm/ref/collection/p15799coll37/id/67220}
\BIBentrySTDinterwordspacing

\bibitem{Bobade2019}
\BIBentryALTinterwordspacing
P.~Bobade, S.~Majumdar, S.~Pereira, A.~J. Kurdila, and J.~B. Ferris,
  ``{Adaptive estimation for nonlinear systems using reproducing kernel Hilbert
  spaces},'' \emph{Advances in Computational Mathematics}, vol.~45, no.~2, pp.
  869--896, 2019. [Online]. Available:
  \url{https://doi.org/10.1007/s10444-018-9639-z}
\BIBentrySTDinterwordspacing

\bibitem{jia2020a}
J.~Guo, S.~T. Paruchuri, and A.~J. Kurdila, ``{Persistence of Excitation in
  Uniformly Embedded Reproducing KernelHilbert (RKH) Spaces (ACC)},'' in
  \emph{American Control Conference}, 2020.

\bibitem{jia2020b}
\BIBentryALTinterwordspacing
------, ``{Persistence of Excitation in Uniformly Embedded Reproducing Kernel
  Hilbert (RKH) Spaces},'' feb 2019. [Online]. Available:
  \url{https://arxiv.org/abs/2002.07963}
\BIBentrySTDinterwordspacing

\bibitem{Bai1985}
\BIBentryALTinterwordspacing
E.~W. Bai and S.~S. Sastry, ``{Persistency of excitation, sufficient richness
  and parameter convergence in discrete time adaptive control},'' \emph{Systems
  {\&} Control Letters}, vol.~6, no.~3, pp. 153--163, 1985. [Online].
  Available:
  \url{http://www.sciencedirect.com/science/article/pii/0167691185900350}
\BIBentrySTDinterwordspacing

\bibitem{Boyd1986}
S.~Boyd{\~{}} and S.~S. Sastry, ``{Necessary and Sufficient Conditions for
  Parameter Convergence in Adaptive Control*},'' Tech. Rep.~6, 1986.

\bibitem{Kurdila1994}
A.~J. Kurdila, F.~J. Narcowich, and J.~D. Ward, ``{Persistency of excitation,
  identification, and radial basis functions},'' in \emph{Proceedings of the
  IEEE Conference on Decision and Control}, vol.~3.\hskip 1em plus 0.5em minus
  0.4em\relax IEEE, 1994, pp. 2273--2278.

\bibitem{Kurdila1995}
------, ``{Persistency of excitation in identification using radial basis
  function approximants},'' \emph{SIAM journal on control and optimization},
  vol.~33, no.~2, pp. 625--642, jul 1995.

\bibitem{Gorinevsky1995}
D.~Gorinevsky, ``{On the persistency of excitation in radial basis function
  network identification of nonlinear systems},'' \emph{IEEE Transactions on
  Neural Networks}, vol.~6, no.~5, pp. 1237--1244, 1995.

\bibitem{Lu1998}
S.~Lu and T.~Basar, ``{Robust nonlinear system identification using
  neural-network models},'' \emph{IEEE Transactions on Neural Networks},
  vol.~9, no.~3, pp. 407--429, 1998.

\bibitem{Wang2006}
C.~Wang and D.~J. Hill, ``{Learning from neural control},'' \emph{IEEE
  Transactions on Neural Networks}, vol.~17, no.~1, pp. 130--146, jan 2006.

\bibitem{Wang2009}
\BIBentryALTinterwordspacing
C.~WANG, T.~CHEN, G.~CHEN, and D.~J. HILL, ``{DETERMINISTIC LEARNING OF
  NONLINEAR DYNAMICAL SYSTEMS},'' \emph{International Journal of Bifurcation
  and Chaos}, vol.~19, no.~04, pp. 1307--1328, apr 2009. [Online]. Available:
  \url{https://doi.org/10.1142/S0218127409023640}
\BIBentrySTDinterwordspacing

\bibitem{Bamieh2002}
B.~Bamieh and L.~Giarr, ``{Identification of linear parameter varying
  models},'' \emph{INTERNATIONAL JOURNAL OF ROBUST AND NONLINEAR CONTROL Int.
  J. Robust Nonlinear Control}, vol.~12, pp. 841--853, 2002.

\bibitem{Adetola2008}
V.~Adetola and M.~Guay, ``{Finite-time parameter estimation in adaptive control
  of nonlinear systems},'' \emph{IEEE Transactions on Automatic Control},
  vol.~53, no.~3, pp. 807--811, apr 2008.

\bibitem{Chowdhary2014}
G.~Chowdhary, M.~M{\"{u}}hlegg, and E.~Johnson, ``{Exponential parameter and
  tracking error convergence guarantees for adaptive controllers without
  persistency of excitation},'' \emph{International Journal of Control},
  vol.~87, no.~8, pp. 1583--1603, aug 2014.

\bibitem{Vamvoudakis2016}
K.~G. Vamvoudakis, M.~F. Miranda, and J.~P. Hespanha, ``{Asymptotically stable
  adaptive-optimal control algorithm with saturating actuators and relaxed
  persistence of excitation},'' \emph{IEEE Transactions on Neural Networks and
  Learning Systems}, vol.~27, no.~11, pp. 2386--2398, nov 2016.

\bibitem{Kersting2019}
S.~Kersting and M.~Buss, ``{Recursive estimation in piecewise affine systems
  using parameter identifiers and concurrent learning Recursive estimation in
  piecewise affine systems using parameter identifiers and concurrent
  learning},'' \emph{International Journal of Control}, vol.~92, no.~6, pp.
  1264--1281, 2019.

\bibitem{Song2017}
Y.~Song, K.~Zhao, and M.~Krstic, ``{Adaptive Control with Exponential
  Regulation in the Absence of Persistent Excitation},'' \emph{IEEE
  Transactions on Automatic Control}, vol.~62, no.~5, pp. 2589--2596, may 2017.

\bibitem{Wang2020}
J.~Wang, D.~Efimov, and A.~A. Bobtsov, ``{On Robust Parameter Estimation in
  Finite-Time without Persistence of Excitation},'' \emph{IEEE Transactions on
  Automatic Control}, vol.~65, no.~4, pp. 1731--1738, apr 2020.

\bibitem{Kurdila2019PE}
\BIBentryALTinterwordspacing
A.~J. Kurdila, J.~Guo, S.~T. Paruchuri, and P.~Bobade, ``{Persistence of
  Excitation in Reproducing Kernel Hilbert Spaces, Positive Limit Sets, and
  Smooth Manifolds},'' sep 2019. [Online]. Available:
  \url{http://arxiv.org/abs/1909.12274}
\BIBentrySTDinterwordspacing

\bibitem{Guo2020Rates}
\BIBentryALTinterwordspacing
J.~Guo, S.~T. Paruchuri, and A.~J. Kurdila, ``{Approximations of the
  Reproducing Kernel Hilbert Space (RKHS) Embedding Method over Manifolds},''
  jul 2020. [Online]. Available: \url{http://arxiv.org/abs/2007.06163}
\BIBentrySTDinterwordspacing

\bibitem{Sanner1992b}
R.~M. Sanner and J.~E. Slotine, ``{Stable Recursive Identification Using Radial
  Basis Function Networks},'' in \emph{1992 American Control Conference}, 1992,
  pp. 1829--1833.

\bibitem{Berlinet2011}
A.~Berlinet and C.~Thomas-Agnan, \emph{{Reproducing kernel Hilbert spaces in
  probability and statistics}}.\hskip 1em plus 0.5em minus 0.4em\relax Springer
  Science {\&} Business Media, 2011.

\bibitem{Saitoh2016}
S.~Saitoh and Y.~Sawano, \emph{{Theory of reproducing kernels and
  applications}}.\hskip 1em plus 0.5em minus 0.4em\relax Springer, 2016.

\bibitem{Aronszajn1950}
\BIBentryALTinterwordspacing
N.~Aronszajn, ``{Theory of Reproducing Kernels},'' \emph{Transactions of the
  American Mathematical Society}, vol.~68, no.~3, pp. 337--404, 1950. [Online].
  Available: \url{http://www.jstor.org/stable/1990404}
\BIBentrySTDinterwordspacing

\bibitem{Paruchuri2020}
\BIBentryALTinterwordspacing
S.~T. Paruchuri, J.~Guo, and A.~Kurdila, ``{Reproducing kernel Hilbert space
  embedding for adaptive estimation of nonlinearities in piezoelectric
  systems},'' \emph{Nonlinear Dynamics}, 2020. [Online]. Available:
  \url{https://doi.org/10.1007/s11071-020-05812-2}
\BIBentrySTDinterwordspacing

\bibitem{Rasmussen2003}
C.~E. Rasmussen, ``{Gaussian processes in machine learning},'' in \emph{Summer
  School on Machine Learning}.\hskip 1em plus 0.5em minus 0.4em\relax Springer,
  2003, pp. 63--71.

\end{thebibliography}

%








\end{document}